\definecolor{blue}{rgb}{0.1,0.2,0.5}
\definecolor{brown}{rgb}{0.6,0.6,0.2}
\theoremstyle{plain}
\newtheorem{theorem}{Theorem}
\newcommand{\newtheoremwithcrefformat}[2]{%
  \newtheorem{#1}[theorem]{#2}%
  \crefformat{#1}{##2\MakeUppercase#1~##1##3}%
  \Crefformat{#1}{##2\MakeUppercase#1~##1##3}%
}
\newcommand{\newseptheoremwithcrefformat}[2]{%
  \newtheorem{#1}{#2}%
  \crefformat{#1}{##2\MakeUppercase#1~##1##3}%
  \Crefformat{#1}{##2\MakeUppercase#1~##1##3}%
}
\theoremstyle{nonumberplain}
\newtheorem{proof}{Proof}
\def\cqedsymbol{\ifmmode$\lrcorner$\else{\unskip\nobreak\hfil
\penalty50\hskip1em\null\nobreak\hfil$\lrcorner$
\parfillskip=0pt\finalhyphendemerits=0\endgraf}\fi}
\newcommand{\N}{\mathbb{N}}
\newcommand{\R}{\mathbb{R}}
\newcommand{\Sc}{\mathcal{S}}
\newcommand{\Uc}{\mathcal{U}}
\newcommand{\Cc}{\mathcal{C}}
\newcommand{\Dd}{\mathcal{D}}
\newcommand{\Lc}{\mathcal{L}}
\newcommand{\Ac}{\mathcal{A}}
\newcommand{\Fc}{\mathcal{F}}
\newcommand{\Ab}{\mathbb{A}}
\newcommand{\Bb}{\mathbb{B}}
\newcommand{\Oh}{\mathcal{O}}
\newcommand{\eps}{\varepsilon}
\newcommand{\tup}[1]{\bar{#1}}
\newcommand{\xb}{\bar{x}}
\newcommand{\yb}{\bar{y}}
\newcommand{\dom}{\text{dom}}
\newcommand{\Isf}{\mathsf{I}}
\newcommand{\Jsf}{\mathsf{J}}
\newcommand{\Ksf}{\mathsf{K}}
\newcommand{\FO}{\mathsf{FO}}
\newcommand{\MSO}{\mathsf{MSO}}
\newcommand{\CMSO}{\mathsf{CMSO}}
\newcommand{\CtwoMSO}{\mathsf{C}_2\mathsf{MSO}}
\newcommand{\topleft}{c_{\text{tl}}}
\newcommand{\bottomleft}{c_{\text{bl}}}
\newcommand{\floor}[1]{\left \lfloor #1 \right \rfloor }
\newcommand{\set}[1]{\left \{ #1 \right \} }
\let\originalleft\left
\let\originalright\right
\renewcommand{\left}{\mathopen{}\mathclose\bgroup\originalleft}
\renewcommand{\right}{\aftergroup\egroup\originalright}
\renewcommand{\leq}{\leqslant}
\renewcommand{\preceq}{\preccurlyeq}
\begin{document}

\title{VC density of set systems definable in tree-like graphs\thanks{An exposition of the results presented in this work can be also found in the master thesis of the first author~\cite{Paszke-thesis19}. 
    The work of Micha\l{} Pilipczuk on this article is a part of project TOTAL
    that has received funding from the European Research Council
    (ERC) under the European Union's Horizon 2020 research and
    innovation programme (grant agreement No.~677651).}}

\author{
Adam~Paszke\thanks{
  Faculty of Mathematics, Informatics, and Mechanics, University of Warsaw, Poland, \texttt{adam.paszke@gmail.com}.
}
\and
Micha\l{}~Pilipczuk\thanks{
  Institute of Informatics, University of Warsaw, Poland, \texttt{michal.pilipczuk@mimuw.edu.pl}.
}
}

\begin{titlepage}
\def\thepage{}
\thispagestyle{empty}
\maketitle

\begin{textblock}{20}(0, 12.5)
\includegraphics[width=40px]{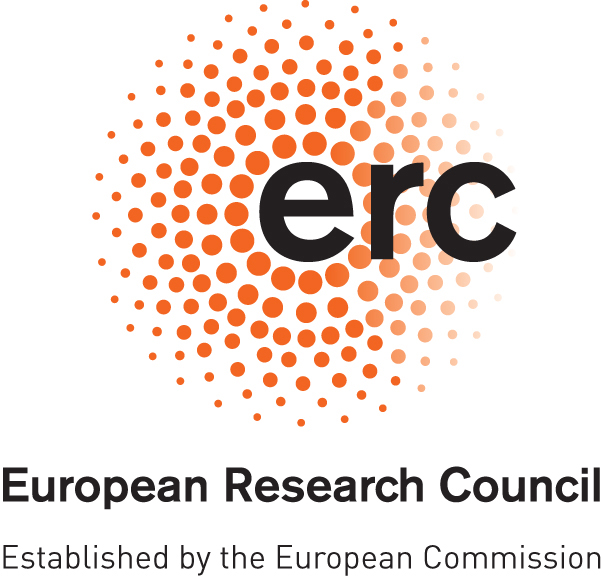}%
\end{textblock}
\begin{textblock}{20}(-0.25, 12.9)
\includegraphics[width=60px]{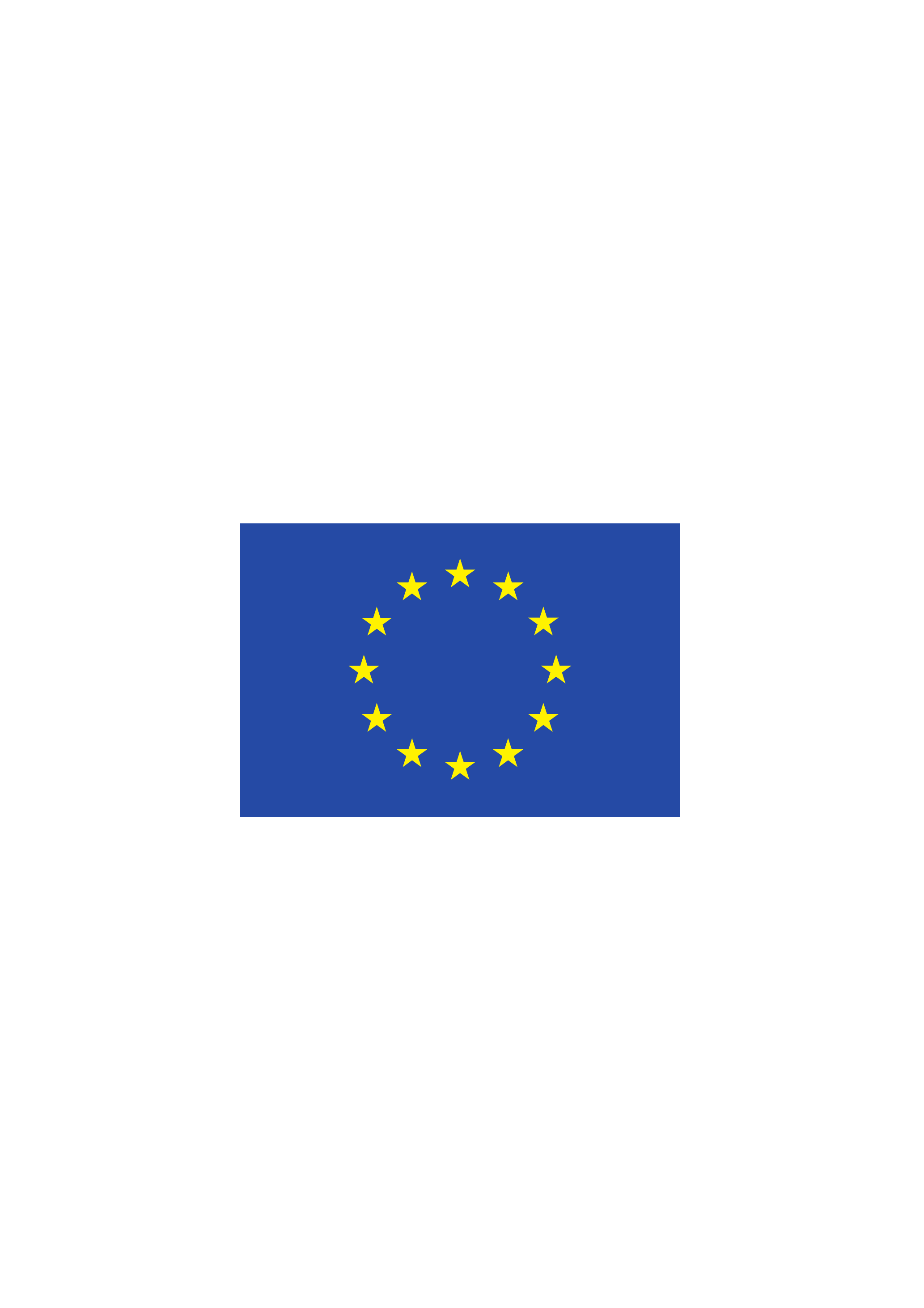}%
\end{textblock}

\begin{abstract}
 We study set systems definable in graphs using variants of logic with different expressive power.
 Our focus is on the notion of {\em{Vapnik-Chervonenkis density}}: the smallest possible degree of a polynomial bounding the cardinalities of restrictions of such set systems.
 On one hand, we prove that if $\varphi(\tup x,\tup y)$ is a fixed $\CMSO_1$ formula and $\Cc$ is a class of graphs with uniformly bounded cliquewidth, 
 then the set systems defined by $\varphi$ in graphs from $\Cc$ have VC density at most $|\tup y|$, which is the smallest bound that one could expect.
 We also show an analogous statement for the case when $\varphi(\tup x,\tup y)$ is a $\CMSO_2$ formula and $\Cc$ is a class of graphs with uniformly bounded treewidth.
 We complement these results by showing that if $\Cc$ has unbounded cliquewidth (respectively, treewidth), then, under some mild technical assumptions on $\Cc$,
 the set systems definable by $\CMSO_1$ (respectively, $\CMSO_2$) formulas in graphs from $\Cc$ may have unbounded VC dimension, hence also unbounded VC density.
\end{abstract}

\end{titlepage}

\section{Introduction}

\paragraph*{VC dimension.} VC dimension is a widely used parameter measuring the complexity of set systems.
Since its introduction in the 70s in the seminal work of Vapnik and Chervonenkis~\cite{Vapnik71-vc-dim}, it became a fundamental notion in statistical learning theory.
VC dimension has also found multiple applications in combinatorics and in algorithm design, particularly in the area of approximation algorithms.

The original definition states that the VC dimension of a set system $\Fc = (\Uc, \Sc)$, where $\Uc$ is the universe and $\Sc$ is the family of sets, 
is equal to the supremum of cardinalities of subsets of $\Uc$ that are shattered by $\Fc$. Here, a subset $X\subseteq \Uc$ is {\em{shattered}} by $\Fc$ 
if the {\em{restriction}} of $\Fc$ to $X$ --- defined as the set system $\Fc[X]=(X,\{S\cap X\colon S\in \Sc\})$ --- is the whole powerset of $X$.

In many applications, the boundedness of the VC dimension is exploited mainly through the {\em{Sauer-Shelah Lemma}}~\cite{Sauer72-sauer-shelah-lemma,Shelah72-sauer-shelah-lemma}, 
which states that a set system $\Fc$ over a universe of size $n$ and
of VC dimension $d$ contains only $\Oh(n^d)$ different sets. As a bound on VC dimension is inherited under restrictions, this implies that for every subset $A$ of the universe, the cardinality of the 
set system $\Fc[A]$ is at most $\Oh(|A|^d)$. This polynomial bound on the sizes of restrictions distinguishes set systems with bounded VC dimension from arbitrary set systems,
where the exponential growth is witnessed by larger and larger shattered sets.

However, for many set systems appearing in various settings, the bound provided by the Sauer-Shelah Lemma is far from optimum: the degree of the best possible polynomial bound
is much lower than the VC dimension. This motivates introducing a more refined notion of the {\em{VC density}} of a set system, 
which is (slightly informally) defined as the lowest possible degree of a polynomial bounding the cardinalities of its restrictions. See Section~\ref{sec:VC} for a formal definition.
The Sauer-Shelah Lemma then implies that the VC density is never larger than the VC dimension, but in fact it can be much lower.
This distinction is particularly important for applications in approximation algorithms, where having VC density equal to one (which corresponds to a linear bound in the Sauer-Shelah Lemma)
implies the existence of $\eps$-nets of size $\Oh(\frac{1}{\eps})$~\cite{Chan12-wsc-approx}, 
while a super-linear bound implied by the boundedness of the VC dimension gives only $\eps$-nets of size $\Oh(\frac{1}{\eps}\log \frac{1}{\eps})$ (see e.g.~\cite{Mustafa17-eps-nets}).
This difference seems innocent at first glance, but shaving off the logarithmic factor actually corresponds to the possibility of designing constant-factor approximation algorithms \cite{Chan12-wsc-approx}.
 
\paragraph*{Defining set systems in logic.}
In this work we study set systems definable in different variants of logic over various classes of graphs.
We concentrate on finding a precise understanding of the connection between the expressive power of the considered logic $\Lc$ and the structural properties of the investigated class of graphs~$\Cc$ that
are necessary and sufficient for the following assertion to hold: $\Lc$-formulas can define only simple set systems in graphs from $\Cc$, where simplicity is measured in terms of the VC parameters.

To make this idea precise, we need a way to define a set system from a graph using a formula.
Let $\varphi(\tup x,\tup y)$ be a formula of some logic $\Lc$ (to be made precise later) in the vocabulary of graphs, where $\tup x,\tup y$ are tuples of free vertex variables. 
Note here that the partition of free variables into $\tup x$ and $\tup y$ is fixed; in this case we say that $\varphi(\tup x,\tup y)$ is a {\em{partitioned formula}}.
Then $\varphi$ defines in a graph $G=(V,E)$ the set system of {\em{$\varphi$-definable sets}}:
$$S^{\varphi}(G) = \left(\ V^{\tup x}\ ,\ \{\{\tup u\in V^{\tup x}\colon G\models \varphi(\tup u,\tup v)\}\colon \tup v\in V^{\tup y}\}\ \right).$$
Here, $V^{\tup x}$ and $V^{\tup y}$ denote the sets of evaluations of variables of $\tup x$ and $\tup y$ in $V$, respectively.
In other words, every $\tup v\in V^{\tup y}$ defines the set consisting of all those $\tup u\in V^{\tup x}$ for which $\varphi(\tup u,\tup v)$ is true in $G$. 
Then $S^{\varphi}(G)$ is a set system over universe $V^{\tup x}$ that comprises all subsets of $V^{\tup x}$ definable in this way.

For an example, if $|\tup x|=|\tup y|=1$ and $\varphi(x,y)$ verifies whether the distance between $x$ and $y$ is at most $d$, for some $d\in \N$, then $S^{\varphi}(G)$ is the set system whose universe is the vertex set of $G$,
while the set family comprises all balls of radius $d$ in $G$.

The situation when the considered logic $\Lc$ is the First Order logic $\FO$ was recently studied by Pilipczuk, Siebertz, and Toru\'nczyk~\cite{Pilipczuk18-types-in-sparse-graphs}.
They showed that the simplicity of $\FO$-definable set systems in graphs is tightly connected to their sparseness, as explained formally next.
On one hand, if $\Cc$ is a {\em{nowhere dense}}\footnote{{\em{Nowhere denseness}} is a notion of uniform sparseness in graphs. As it is not directly related to our investigations, we refrain from
giving a formal definition, and refer the interested reader to the discussion in~\cite{Pilipczuk18-types-in-sparse-graphs} instead.}
class of graphs, 
then for every partitioned $\FO$ formula $\varphi(\tup x,\tup y)$, $\varphi$ defines in graphs from $\Cc$ set systems of VC density at most $|\tup y|$.
On the other hand, if $\Cc$ is not nowhere dense, but is closed under taking subgraphs, then there exists a partitioned $\FO$ formula that defines in graphs from $\Cc$ set systems of arbitrarily high
VC dimension, hence also arbitrarily high VC density.
Note that one cannot expect lower VC density than $|\tup y|$ for any non-trivial logic $\Lc$ and class $\Cc$, because already the very simple formula $\alpha(x,\tup y)=\bigvee_{i=1}^{|\tup y|}\,(x=y_i)$ 
defines set systems of VC density $|\tup y|$ in edgeless graphs. Thus, in some sense the result stated above provides a sharp dichotomy.

In this work we are interested in similar dichotomy statements for more expressive variants of logic on graphs, namely $\MSO_1$ and $\MSO_2$.
Recall that $\MSO_1$ on graphs extends $\FO$ by allowing quantification over subsets of vertices, while in $\MSO_2$ one can in addition quantify over subsets of edges.
This setting has been investigated by Grohe and Tur\'an~\cite{Grohe04-mso-tree-vc-dim}. 
They proved that if graphs from a graph class $\Cc$ have uniformly bounded cliquewidth (i.e. there is a constant $c$ that is an upper bound on the cliquewidth of every member of $\Cc$), then every $\MSO_1$ formula
defines in graphs from $\Cc$ set systems with uniformly bounded VC dimension. They also gave a somewhat complementary lower bound showing that if $\Cc$ contains graphs of arbitrarily high treewidth and is closed under
taking subgraphs, then there exists a fixed $\MSO_1$ formula that defines in graphs from $\Cc$ set systems with unbounded VC dimension.

\paragraph*{Our contribution.} We improve the results of Grohe and Tur\'an~\cite{Grohe04-mso-tree-vc-dim} in two aspects. 
First, we prove tight upper bounds on the VC density of the considered set systems, and not only on the VC dimension.
Second, we clarify the dichotomy statements by showing that the boundedness of the VC parameters for set systems definable in $\MSO_1$ is tightly connected to the boundedness of cliquewidth,
and there is a similar connection between the complexity of set systems definable in $\MSO_2$ and the boundedness of treewidth.
Formal statements follow.

For the upper bounds, our results are captured by the following theorem. Here, $\CMSO_1$ and $\CMSO_2$ are extensions of $\MSO_1$ and $\MSO_2$, respectively, by {\em{modular predicates}} of the form 
$|X|\equiv a\bmod p$, where $X$ is a monadic variable and $a,p$ are integers. Also, $\CtwoMSO_1$ is a restriction of $\CMSO_1$ where we allow only modular predicates with $p=2$, that is, checking the parity of
the cardinality of a set.

\begin{theorem}\label{thm:upper-bounds}
    Let $\Cc$ be a class of graphs and $\varphi(\xb, \yb)$ be a partitioned formula. Additionally, assume that one of the following assertions holds:
    \begin{itemize}[nosep]
        \item[(i)] $\Cc$ has uniformly bounded cliquewidth and $\varphi(\xb, \yb)$ is a $\CMSO_1$-formula; or
        \item[(ii)] $\Cc$ has uniformly bounded treewidth and $\varphi(\xb, \yb)$ is a $\CMSO_2$-formula.
    \end{itemize}
    Then there is a constant $c\in \N$ such that for every graph $G\in \Cc$ and non-empty vertex subset $A\subseteq V(G)$,
    $$|S^{\varphi}(G)[A]|\leq c\cdot |A|^{|\tup y|}.$$
\end{theorem}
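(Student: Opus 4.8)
The plan is to reduce both cases to a single statement about finite tree automata on colored trees, and then prove that statement by analyzing how the automaton's run on $(T,\tup u,\tup v)$ depends on the evaluation $\tup v$ of $\tup y$.

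\textbf{Reductions.} For~(i), I would use the standard encoding of a graph $G$ of cliquewidth at most $w$ by a $w$-expression: a finite tree $T$ (the parse tree) whose leaves are the vertices of $G$, colored over a finite alphabet depending only on $w$, in which the adjacency relation of $G$ is $\MSO_1$-definable. A $\CMSO_1$-formula $\varphi(\tup x,\tup y)$ over $G$ then translates to a $\CMSO_1$-formula $\widehat\varphi(\tup x,\tup y)$ over $T$ whose free variables range over the leaves, with $S^{\varphi}(G)[A]\cong S^{\widehat\varphi}(T)[A]$ for every nonempty $A\subseteq V(G)$. For~(ii), I would first pass to the incidence encoding: $\CMSO_2$ over a graph $G$ is equivalent to $\CMSO_1$ over the incidence structure $I(G)$ (with a unary predicate marking the vertex-elements), and bounded treewidth of $G$ gives bounded treewidth, hence bounded cliquewidth, of $I(G)$; this reduces~(ii) to~(i) applied to $\{I(G):G\in\Cc\}$, with $\tup x,\tup y$ restricted to vertex-elements. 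As all of these translations depend only on $\varphi$ and the uniform width bound, it suffices to prove: for every deterministic bottom-up tree automaton $\mathcal A$ on colored trees whose alphabet is augmented by markers for $m+\ell$ distinguished nodes (with $m=|\tup x|,\ \ell=|\tup y|$), there is a constant $c$ such that for every colored tree $T$ and nonempty set $A$ of its nodes, the number of distinct sets $D_{\tup v}:=\{\tup u\in A^{m}:\mathcal A\text{ accepts }(T,\tup u,\tup v)\}$, over $\tup v\in V(T)^{\ell}$, is at most $c\cdot|A|^{\ell}$. (This uses that $\CMSO$-definable tree languages are regular.)

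\textbf{The tree case.} Fix $T$, $A$, and let $Q$ be the state set of $\mathcal A$. Let $T_A$ be the smallest subtree of $T$ containing $A$ and the root; its leaves lie in $A$, so it has at most $|A|-1$ branching nodes. Calling the root, the branching nodes, and the elements of $A$ \emph{special}, we get $\Oh(|A|)$ special nodes, which cut $T_A$ into $\Oh(|A|)$ \emph{segments} (maximal special-node-free paths), while $T\setminus T_A$ is a disjoint union of \emph{pendant} subtrees hung below nodes of $T_A$, each disjoint from $A$. When $\mathcal A$ runs on $(T,\tup u,\tup v)$ with $\tup u$ ranging over $A^{m}$, all $\tup x$-markers sit at special nodes; hence the state at the top of a pendant subtree, and the map by which the state is transformed while it travels up a segment (through the pendant subtrees hung off the segment's interior), depend only on $\tup v$. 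Since that map is some element of $Q^{Q}$, of which there are only $|Q|^{|Q|}$, each segment contributes one of boundedly many transformations of $Q$. I would then assign to $\tup v$ a \emph{signature} recording, for each of the $\ell$ markers, the \emph{location} --- a special node or a segment of $T_A$, of which there are $\Oh(|A|)$ --- of its deepest ancestor in $T_A$, and, for each location that is hit, the induced transformation (for a hit segment) or the induced states of its pendant subtrees (for a hit special node). One checks that $D_{\tup v}$ is determined by this signature: the run of $\mathcal A$ splits into the fixed skeleton $T_A$, the $\tup u$-dependent choices that are localized at special nodes, and the pendant/segment transformations, which are precisely what the signature encodes. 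A signature is a function from $\ell$ markers to $\Oh(|A|)$ locations together with one of $\Oh(1)$ data items at each hit location, so there are $\Oh(|A|^{\ell})$ signatures, which yields the bound.

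\textbf{Main obstacle.} The reductions are routine; the content is in the tree case, and in particular in the bookkeeping for a segment carrying several $\tup y$-markers --- one has to verify that their joint influence on the state propagated through that segment is one of boundedly many maps and, crucially, does not depend on the exact positions of the markers along the segment. This is exactly where the pigeonhole over compositions of transition functions is used, and it is what a naive structural induction on $T$ misses: that induction incurs a multiplicative loss at every node, hence an unbounded loss along long paths of $T$, whereas the segment/pendant analysis absorbs an entire long path into $\Oh(1)$ bits of data.
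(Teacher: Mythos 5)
Your proposal is correct and follows essentially the same route as the paper: reduce to labelled trees via the cliquewidth transduction (and the incidence-graph trick for treewidth), convert the formula to a tree automaton, cut the tree at the $\Oh(|A|)$ nodes consisting of the root, $A$, and the branching nodes of the Steiner tree of $A$, and observe that each resulting piece contributes a state transformation from a bounded set that is independent of $\tup u\in A^{\tup x}$, so the set defined by $\tup v$ is determined by a signature with $\Oh(|A|^{|\tup y|})$ possible values. The only nit is that the data recorded at a hit special node should also include which $\tup y$-variables sit at that node itself (not merely the states of its pendant subtrees), since that changes the symbol read there; this costs only a constant factor and does not affect the bound.
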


In particular, this implies that for a partitioned formula $\varphi(\tup x,\tup y)$, 
the class of set systems $S^{\varphi}(\Cc)$ has VC density $|\tup y|$ whenever $\Cc$ has uniformly bounded cliquewidth and
$\varphi$ is a $\CMSO_1$-formula, or $\Cc$ has uniformly bounded treewidth and $\varphi$ is a $\CMSO_2$-formula. 

Note that \cref{thm:upper-bounds} provides much better bounds on the cardinalities of restrictions of the considered set systems than bounding the VC dimension and using the Sauer-Shelah Lemma, 
as was done in~\cite{Grohe04-mso-tree-vc-dim}.
In fact, as argued in~\cite[Theorem 12]{Grohe04-mso-tree-vc-dim}, even in the case of defining set systems over words, the VC dimension can be tower-exponential high with respect to the size of the formula.
In contrast, \cref{thm:upper-bounds} implies that the VC density will be actually much lower: at most $|\tup y|$.
This improvement has an impact on some asymptotic bounds in learning-theoretical corollaries discussed by Grohe and Tur\'an, see e.g.~\cite[Theorem~1]{Grohe04-mso-tree-vc-dim}.

For lower bounds, we work with labelled graphs. For a finite label set $\Lambda$, a {\em{$\Lambda$-v-labelled graph}} is a graph whose vertices are labelled using labels from $\Lambda$,
while in a {\em{$\Lambda$-ve-labelled graph}} we label both the vertices and the edges using $\Lambda$.
For a graph class $\Cc$, by $\Cc^{\Lambda,1}$ we denote the class of all $\Lambda$-v-labelled graphs whose underlying unlabeled graphs belong to $\Cc$, while $\Cc^{\Lambda,2}$ is
defined analogously  for $\Lambda$-ve-labelled graphs. The discussed variants of $\MSO$ work over labelled graphs in the obvious way.

\begin{theorem}
    \label{thm:lower-bounds}
    There exists a finite label set $\Lambda$ such that the following holds.
    Let $\Cc$ be a class of graphs and $\Lc$ be a logic such that either
    \begin{itemize}[nosep]
        \item[(i)] $\Cc$ contains graphs of arbitrarily large cliquewidth and $\Lc=\CtwoMSO_1$; or
        \item[(ii)] $\Cc$ contains graphs of arbitrarily large treewidth and $\Lc=\MSO_2$.
    \end{itemize}
    Then there exists a partitioned $\Lc$-formula $\varphi(x,y)$ in the vocabulary of graphs from $\Cc^{\Lambda,t}$, where $t=1$ if (i) holds and $t=2$ if (ii) holds,
    such that the family 
    $$\{\ S^\varphi(G)\,\colon\, G\in \Cc^{\Lambda,t}\ \},$$
    contains set systems with arbitrarily high VC dimension.
\end{theorem}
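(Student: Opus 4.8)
The plan is to reduce both cases to one combinatorial construction. For each $d$ I would build a subcubic planar \emph{shattering gadget} $\Gamma_d$ together with a single fixed formula $\psi(x,y)$ for which $S^\psi(\Gamma_d)$ has VC dimension at least $d$. The gadget $\Gamma_d$ is a complete binary tree of depth $d$ --- whose $2^d$ leaves will index the sets --- in which every tree edge is coloured \emph{left} or \emph{right}, every internal node is expanded into a short path so that, for each level $i$, a \emph{level path} $L_i$ can be threaded through all depth-$i$ nodes without pushing any degree above $3$; the resulting graph is planar. The formula $\psi(x,y)$ asserts: ``$y$ is a leaf, $x$ lies on some level path $L$, and the unique node $z$ that is reachable from $x$ along level-path edges and is an ancestor of $y$ has its step towards $y$ coloured \emph{right}''. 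Choosing one representative $r_i\in L_i$ per level, $\{r_1,\dots,r_d\}$ is shattered: the leaf whose root-to-leaf path turns right at exactly the levels in a set $I\subseteq\{1,\dots,d\}$ realises exactly $\{r_i:i\in I\}$, and every non-leaf $y$ realises $\emptyset$. All ingredients of $\psi$ --- ``is a leaf'', reachability along level-path edges, ancestry along tree edges, the colour of the relevant step --- are expressible in $\MSO_1$ over a \emph{fixed} finite set of vertex labels recording roles and the left/right colouring; this set, together with the labels needed for case (ii), is the $\Lambda$ of the statement, independent of $d$ and of $\Cc$.

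\textbf{Case (ii): unbounded treewidth, $\MSO_2$.} Here $\Cc$ has unbounded treewidth, so by the Excluded Grid Theorem it contains, for every $N$, a graph $G$ with the $N\times N$ grid as a minor. Since $\Gamma_d$ is subcubic, a routine rerouting inside a grid-minor model --- each branch set need only host a single path or a tripod joining up to three incident edges --- shows that for $N=N(d)$ large enough some $G\in\Cc$ contains a subdivision $\Gamma_d'$ of $\Gamma_d$ as a \emph{subgraph}. I would then label $G$: the vertices and edges of this copy of $\Gamma_d'$ get labels recording their roles, the left/right colouring, and whether a vertex is genuine or a subdivision vertex, while every other vertex and edge gets a neutral label. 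The formula $\varphi$ is $\psi$ with all quantifiers relativised to the $\Gamma_d'$-labelled part and adjacency replaced by ``joined by a labelled edge'', preceded by an $\MSO_2$-definable step that contracts maximal runs of subdivision vertices. The edge labels are exactly what lets $\varphi$ discard the ``noise'' edges of $G$, and then $S^\varphi(G)$ restricted to $\{r_1,\dots,r_d\}$ is the full power set, so VC dimension grows without bound. (This $\varphi$ is plain $\MSO_2$; no modular predicates are needed here.)

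\textbf{Case (i): unbounded cliquewidth, $\CtwoMSO_1$ --- the hard part.} The same template applies, but with a serious twist. Unbounded cliquewidth already forces unbounded treewidth, so $\Gamma_d'$ again occurs as a subdivision-subgraph of some $G\in\Cc$; the trouble is that over $\Cc^{\Lambda,1}$ one has only vertex labels and, in $\CtwoMSO_1$, no edge quantification, so one can no longer ``restrict to gadget edges'' to eliminate the noise edges of $G$. Unbounded cliquewidth must therefore be exploited beyond unbounded treewidth, to obtain an \emph{induced} --- hence noise-free --- copy of a shattering gadget inside graphs of $\Cc$. The plan is to adapt the machinery behind the Courcelle--Oum theorem on Seese's conjecture (rank-width, vertex-minors, isotropic systems): from any class of unbounded cliquewidth, extract by a \emph{fixed} $\CtwoMSO_1$-transduction arbitrarily large shattering gadgets realised as induced substructures, and compose this transduction with $\psi$ to obtain the desired $\CtwoMSO_1$-formula $\varphi(x,y)$. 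The one place where modular predicates are indispensable is in running that transduction: local complementation at an \emph{independent} set $D$ is order-independent and makes $u,v$ adjacent iff exactly one of ``$u$ adjacent to $v$'' and ``$|N(u)\cap N(v)\cap D|$ odd'' holds, which is definable in $\CtwoMSO_1$ but not in $\MSO_1$ --- and this is precisely the operation that turns the structure guaranteed by large rank-width into a clean gadget.

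\textbf{Main obstacle.} The step I expect to be the real obstacle is exactly this: cliquewidth has no clean excluded-grid theorem, so one must reconstruct --- in the rather restrictive form of a uniform $\CtwoMSO_1$-transduction --- the fact that unbounded cliquewidth yields arbitrarily large ``shattering-capable'' substructures; equivalently, one must show that bounded cliquewidth is characterised by the impossibility of $\CtwoMSO_1$-transducing arbitrarily large full shattering gadgets. Once such a ``clean-gadget access'' statement is available, the rest is comparatively routine: the gadget $\Gamma_d$ and the formula $\psi$ are robust under subdivision and under a bounded amount of relabelling, so both cases close uniformly with a single finite label set $\Lambda$.
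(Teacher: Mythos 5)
Your case (ii) is a viable alternative to the paper's route: the paper never embeds a gadget as a labelled subgraph, but instead gives a non-deterministic $\MSO_2$-transduction (guessing a minor model via three unary predicates on the incidence encoding) whose image contains all grid graphs, then a further transduction from grid graphs to grids, and finally invokes Grohe and Tur\'an's observation that a fixed $\MSO$ formula shatters a set of size $\lfloor\log n\rfloor$ in the $n\times n$ grid; the Backwards Translation Lemma pulls everything back, and $\Lambda$ is just the powerset of the guessed predicates. Your subcubic-gadget-plus-relativisation argument should also work and is arguably more self-contained, though note that the grid already \emph{is} a perfectly good shattering gadget (binary counting along rows), so inventing $\Gamma_d$ buys you little.

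The genuine gap is in case (i), and you have correctly located it but not closed it. You write that one must ``adapt the machinery behind the Courcelle--Oum theorem'' to extract, by a fixed $\CtwoMSO_1$-transduction, arbitrarily large clean shattering structures from any class of unbounded cliquewidth, and you flag this as the main obstacle. This is exactly the content of Corollary~7.5 of Courcelle and Oum (\emph{Vertex-minors, monadic second-order logic, and a conjecture by Seese}): there is a single $\CtwoMSO_1$-transduction $\Isf$ from adjacency encodings to adjacency encodings such that $\Isf(\Cc)$ contains arbitrarily large grid graphs whenever $\Cc$ has unbounded cliquewidth. The paper cites this as a black box; without it (or a reproof of it), your case (i) is a plan, not a proof, since the rank-width/vertex-minor/isotropic-system argument is a substantial theorem in its own right. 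Two smaller remarks: first, you do not need the gadget to appear as an \emph{induced substructure} of $G$ --- the output of a transduction is already a clean structure with no noise edges, so once you have a transduction onto large grid graphs you can compose with a further transduction onto your gadget (or just shatter directly in the grid) and pull the formula back; insisting on induced occurrence asks for more than is needed and more than Courcelle--Oum provides. Second, your explanation of where the parity predicates enter (defining local complementation at an independent set) is the right intuition, but it lives inside the proof of the cited theorem, not in the part of the argument you would actually have to write.
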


Thus, the combination of \cref{thm:upper-bounds} and~\cref{thm:lower-bounds} provides a tight understanding of 
the usual connections between $\MSO_1$ and cliquewidth, and between $\MSO_2$ and treewidth, also in the setting of definable set systems. 
We remark that the second connection was essentially observed by Grohe and Tur\'an in~\cite[Corollary~20]{Grohe04-mso-tree-vc-dim}, whereas the first seems new, but follows from a very similar argument.

As argued by Grohe and Tur\'an in~\cite[Example~21]{Grohe04-mso-tree-vc-dim}, some mild technical conditions, like closedness under labelings with a finite label set, is necessary for
a result like \cref{thm:lower-bounds} to hold. Indeed, the class of $1$-subdivided complete graphs has unbounded treewidth and cliquewidth, yet $\CMSO_1$- and $\CMSO_2$-formulas can only define
set systems of bounded VC dimension on this class, due to symmetry arguments. Also, the fact that in the case of unbounded cliquewidth we need to rely on logic $\CtwoMSO_1$ instead of plain $\MSO_1$ is connected
to the longstanding conjecture of Seese~\cite{Seese91-mso-tw-undecidable} about decidability of $\MSO_1$ in classes of graphs.


\section{Preliminaries}\label{sec:prelims}

\subsection{Vapnik-Chervonenkis parameters}\label{sec:VC}
In this section we briefly recall the main definitions related to the Vapnik-Chervonenkis parameters.
We only provide a terse summary of the relevant concepts and results, and refer to the work of Mustafa and Varadarajan \cite{Mustafa17-eps-nets} for a broader context.

A {\em{set system}} is a pair $\Fc=(\Uc, \Sc)$, where $\Uc$ is the {\em{universe}} or {\em{ground set}}, while $\Sc$ is a family of subsets of $\Uc$.
While a set system is formally defined as the pair $(\Uc, \Sc)$, we will often use that term with a family $\Sc$ alone, and then $\Uc$ is implicitly taken to be $\bigcup_{S \in \Sc} S$.
The {\em{size}} of a set system is $|\Fc|\coloneqq |\Sc|$.

For a set system $\Fc=(\Uc,\Sc)$ and $X\subseteq \Uc$, the {\em{restriction}} of $\Sc$ to $X$ is the set system $\Fc[X]\coloneqq (X,\Sc\cap X)$, where $\Sc\cap X\coloneqq \{S\cap X\colon S\in \Sc\}$.
We say that $X$ is \textit{shattered} by $\Fc$ if $\Sc\cap X$ is the whole powerset of $X$.
Then the {\em{VC dimension}} of $\Fc$ is the supremum of cardinalities of sets shattered by $\Fc$.

As we are mostly concerned with the asymptotic behavior of restrictions of set systems, the following notion will be useful.
\begin{definition}\label{def:shatter-function}
    The {\em{growth function}} of a set system $\Fc=(\Uc,\Sc)$ is the function $\pi_{\Fc}\colon \N\to \N$ defined as:
    $$\pi_{\Fc}(n) = \max\ \{\ |\Sc \cap X|\, \colon\, X \subseteq \Uc,\, |X| = n\ \}\qquad \textrm{for }n\in \N.$$
\end{definition}

Clearly, for any set system $\Fc$ we have that $\pi_{\Fc}(n) \leq 2^n$, but many interesting set systems admit asymptotically polynomial bounds.
This is in particular implied by the boundedness of the VC dimension, via the Sauer-Shelah Lemma stated below.

\begin{lemma}[Sauer--Shelah Lemma~\cite{Sauer72-sauer-shelah-lemma,Shelah72-sauer-shelah-lemma}]\label{lem:sauer-shelah}
    If $\Fc$ is a set system of VC dimension $d$, then
    $$\pi_{\Fc}(n) \leq \binom{n}{0}+\binom{n}{1}+\ldots+\binom{n}{d} \leq \Oh(n^d).$$
\end{lemma}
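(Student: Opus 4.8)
The plan is to prove the slightly stronger statement that, for any set system $\Fc=(\Uc,\Sc)$ of VC dimension at most $d$ over a finite universe, $|\Sc|$ is bounded by the number of subsets of $\Uc$ that are shattered by $\Fc$. Since every shattered set has cardinality at most $d$, and there are only $\binom{n}{0}+\binom{n}{1}+\ldots+\binom{n}{d}$ subsets of size at most $d$ in an $n$-element universe, the lemma then follows: given an arbitrary $\Fc$ and $n\in\N$, one picks $X\subseteq\Uc$ with $|X|=n$, applies the stronger statement to the restriction $\Fc[X]$ (whose universe has size $n$ and whose VC dimension is at most that of $\Fc$, as shattering is inherited by restrictions), and notes that $|\Sc\cap X|=|\Fc[X]|$ is thereby at most $\sum_{i=0}^{d}\binom{n}{i}$; taking the maximum over such $X$ bounds $\pi_{\Fc}(n)$.

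To prove the stronger statement I would use a down-shifting (compression) argument. For a fixed $x\in\Uc$, define an operator turning $\Sc$ into a family $\Sc'$ over the same universe by replacing each $S\in\Sc$ with $x\in S$ and $S\setminus\{x\}\notin\Sc$ by $S\setminus\{x\}$, and leaving all other members of $\Sc$ unchanged. This operation is injective, so $|\Sc'|=|\Sc|$, and the key point to verify is that every set shattered by $\Sc'$ is also shattered by $\Sc$; this is immediate for sets avoiding $x$, and for a set $X$ containing $x$ it follows from a short case analysis whose crux is that any member of $\Sc'$ containing $x$ already belongs to $\Sc$ and, not having been eligible for shifting, also witnesses that its $x$-deletion lies in $\Sc$. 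In particular the VC dimension does not increase. Repeatedly applying such operators for every $x\in\Uc$ strictly decreases $\sum_{S\in\Sc}|S|$ whenever anything changes, so after finitely many steps we reach a family $\Sc^\star$ stable under all of them; such $\Sc^\star$ is \emph{downward closed}, i.e.\ $S\in\Sc^\star$ and $R\subseteq S$ imply $R\in\Sc^\star$. For a downward closed family each member $S$ is itself shattered (all its subsets belong to the family), so $|\Sc^\star|$ is at most the number of sets shattered by $\Sc^\star$, which by the previous sentence is at most the number of sets shattered by the original $\Sc$; together with $|\Sc|=|\Sc^\star|$ this is exactly the claim.

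As an alternative I would run an induction on $n=|\Uc|$: fix $x\in\Uc$, set $\Sc_1=\{\,S\setminus\{x\}:S\in\Sc\,\}$ and $\Sc_2=\{\,S\in\Sc:x\notin S,\ S\cup\{x\}\in\Sc\,\}$, observe $|\Sc|=|\Sc_1|+|\Sc_2|$, apply the inductive hypothesis to $\Sc_1$ and $\Sc_2$ over the universe $\Uc\setminus\{x\}$, and check that any set shattered by $\Sc_1$ is shattered by $\Sc$ and avoids $x$, whereas for any $Y$ shattered by $\Sc_2$ the set $Y\cup\{x\}$ is shattered by $\Sc$; the two resulting families of shattered subsets of $\Uc$ are disjoint and both consist of sets shattered by $\Sc$. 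In either approach the only genuinely delicate point --- and the step I expect to demand the most care --- is showing that compression (respectively, the passage to $\Sc_2$) cannot create a shattered set that was not already shattered in the original family; the remaining arguments are bookkeeping and a routine count of subsets of size at most $d$.
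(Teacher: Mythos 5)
The paper does not prove this lemma at all --- it is stated as a cited classical result of Sauer and Shelah --- so there is no in-paper argument to compare against. Your proposal is a correct and complete proof. Both routes you sketch are standard and sound: the down-shifting argument establishes the stronger ``Pajor'' form of the lemma, namely that $|\Sc|$ is at most the number of sets shattered by $\Fc$, from which the bound $\pi_{\Fc}(n)\leq\sum_{i=0}^{d}\binom{n}{i}$ follows by restricting to an $n$-element subset and counting subsets of size at most $d$; and the inductive decomposition $|\Sc|=|\Sc_1|+|\Sc_2|$ with the injection $Y\mapsto Y\cup\{x\}$ on the shattered sets of $\Sc_2$ is the classical alternative. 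The one step you flag as delicate --- that shifting cannot create new shattered sets --- is handled correctly: your observation that an unshifted member of $\Sc'$ containing $x$ must have its $x$-deletion already in $\Sc$ is exactly the crux needed for the case $x\in X$. The reduction from $\pi_{\Fc}(n)$ to the finite-universe statement via $\Fc[X]$ (noting that VC dimension does not increase under restriction) is also correct, as is the final estimate $\sum_{i=0}^{d}\binom{n}{i}=\Oh(n^d)$ for fixed $d$. No gaps.
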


Note that when the VC dimension of $\Fc$ is not bounded, then for every $n$ there is a set of size $n$ that is shattered by $\Fc$, which implies that $\pi_{\Fc}(n)=2^n$.
This provides an interesting dichotomy: if $\pi_{\Fc}(n)$ is not bounded by a polynomial, it must be equal to the function $2^n$.

As useful as the Sauer--Shelah Lemma is, the upper bound on asymptotics of the growth function implied by it is quite weak for many natural set systems.
Therefore, we will study the following quantity.

\begin{definition}
    The {\em{VC density}} of a set system $\Fc$ is the quantity
    $$\inf\,\{\ \alpha\in \R^+\ \colon\ \textrm{there exists }c\in \R\textrm{ such that }\pi_{\Fc}(n)\leq c\cdot n^\alpha\textrm{ for all }n\in \N\ \}.$$
\end{definition}

Observe that the definition of the VC density of $\Fc$ makes little sense when the universe of $\Fc$ is finite, as then the growth function ultimately becomes $0$, allowing a polynomial bound of arbitrary small degree.
Therefore, we extend the definition of VC density to {\em{classes}} of finite set systems (i.e., families of finite set systems) as follows:
the VC density of a class $\Cc$ is the infimum over all $\alpha\in \R^+$ for which there is $c\in \R$ such that $\pi_{\Fc}(n)\leq c\cdot n^\alpha$ for all $\Fc\in \Cc$ and $n\in \N$.
Note that this is equivalent to measuring the VC density of the set system obtained by taking the union of all set systems from $\Cc$ on disjoint universes.
Similarly, the VC dimension of a class of set systems $\Cc$ is the supremum of the VC dimensions of the members of $\Cc$.

Thus, informally speaking the VC density of $\Fc$ is the lowest possible degree of a polynomial bound that fits the conclusion of the Sauer--Shelah lemma for $\Fc$.
Clearly, the Sauer--Shelah lemma implies that the VC density is never larger than the VC dimension, but as it turns out, that connection goes both ways:

\begin{lemma}[\cite{Mustafa17-eps-nets}]
    A set system $\Fc$ satisfying $\pi_{\Fc}(n) \leq cn^d$ for all $n\in \N$ has VC dimension bounded by $4d\log(cd)$.
\end{lemma}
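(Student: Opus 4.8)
The plan is to run the Sauer--Shelah phenomenon in reverse: a shattered set of size $k$ forces the growth function to take the value $2^{k}$ at cardinality $k$, and once $k$ exceeds roughly $4d\log(cd)$ this collides with the polynomial bound $\pi_{\Fc}(n)\leq cn^{d}$.

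Concretely, I would first recall that shattering is monotone under taking subsets, so if the VC dimension of $\Fc$ is at least $k$ then $\Fc$ shatters some set $X$ with exactly $|X|=k$. By the definition of shattering, $\Sc\cap X$ is the whole powerset of $X$, so $|\Sc\cap X|=2^{k}$ and hence $\pi_{\Fc}(k)\geq 2^{k}$. Plugging this into the hypothesis yields the single inequality that carries the whole proof,
$$2^{k}\ \leq\ c\,k^{d},\qquad\text{equivalently}\qquad k\ \leq\ \log_2 c+d\log_2 k,$$
so it remains only to show that this forces $k\leq 4d\log_2(cd)$; the same conclusion then applies to every shattered set and hence bounds the VC dimension of $\Fc$. (Here, as the constant $4$ indicates, $\log$ is read as $\log_2$.)

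This last step is elementary estimation, which I would carry out by contradiction, working under the natural normalization $c,d\geq 1$ and assuming $k>4d\log_2(cd)$. In the main case $cd\geq 4$ we have $\log_2(cd)\geq 2$, so $k>4d$, which places $k$ in the range where the map $t\mapsto t-d\log_2 t$ is increasing; evaluating this map at $t=4d\log_2(cd)$ and simplifying via $2\leq\log_2(cd)$ and $\log_2 c,\log_2 d,\log_2\log_2(cd)\leq\log_2(cd)$ shows that its value there is at least $d\log_2(cd)\geq\log_2 c$. Hence $k-d\log_2 k>\log_2 c$, i.e.\ $2^{k}>ck^{d}$, contradicting the displayed inequality. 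The remaining range $cd<4$ is a corner case, disposed of by inspecting $2^{k}\leq ck^{d}$ directly --- together with the trivial observation that $\pi_{\Fc}(1)\leq c$ already rules out a shattered singleton once $c<2$ --- which forces the VC dimension to be a small quantity bounded by $4d\log_2(cd)$. The one genuine obstacle here is this arithmetic bookkeeping: tuning the estimates and handling the corner cases sharply enough to land on the clean constant $4$ rather than a larger absolute one. A cruder bound on $\log_2 k$ (for instance $\log_2 k\leq k/(e\ln 2)$, from $\ln x\leq x/e$) would prove a statement of the same shape with a worse constant and considerably less effort.
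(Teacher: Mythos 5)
The paper does not prove this lemma --- it is quoted from the cited survey of Mustafa and Varadarajan --- so there is no in-paper argument to compare against. Your proof is the standard (and correct) one: a shattered $k$-set gives $\pi_{\Fc}(k)=2^k$, hence $2^k\leq ck^d$, and one then inverts this inequality; your main-case estimate ($cd\geq 4$, monotonicity of $t\mapsto t-d\log_2 t$ past $4d\log_2(cd)$, and the three bounds $2,\log_2 d,\log_2\log_2(cd)\leq\log_2(cd)$) checks out. The only soft spot is the corner case $cd<4$, which you assert rather than carry out; it does work (e.g.\ for $d=1$, $c<2$ the singleton observation gives VC dimension $0$, while for $2\leq c<4$ one gets $k\leq 4\leq 4\log_2 c$, and similarly for $d=2,3$), but as stated it is a claim, not a proof. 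Note also that the statement implicitly requires $c,d\geq 1$ (for $d<1$ the bound $4d\log(cd)$ can be negative while the VC dimension is at least $0$), so your ``normalization'' is really a hypothesis you should flag as such.
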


Hence, a set system $\Fc$ has finite VC dimension if and only if it has finite VC density, but the results showing their equivalence usually produce relatively weak bounds.
As discussed in the introduction, VC density is often a finer measure of complexity than VC dimension for interesting problems.


\subsection{Set systems definable in logic}

We assume basic familiarity with relational structures.
The {\em{domain}} (or {\em{universe}}) of a relational structure $\Ab$ will be denoted by $\dom(\Ab)$.
For a tuple of variables $\tup x$ and a subset $S\subseteq \dom(\Ab)$, by $S^{\tup x}$ we denote the set of all {\em{evaluations}} of $\tup x$ in $S$, that is, functions mapping the variables of $\tup x$ to elements of $S$.
A {\em{class}} of structures is a set of relational structures over the same signature.

Consider a logic $\Lc$ over some relational signature $\Sigma$.
A {\em{partitioned formula}} is an $\Lc$-formula of the form $\varphi(\xb, \yb)$, where the free variables are partitioned into \textit{object variables} $\xb$ and \textit{parameter variables} $\yb$.
Then for a $\Sigma$-structure $\Ab$, we can define the set system of {\em{$\varphi$-definable sets}} in $\Ab$:
$$S^{\varphi}(\Ab) = \left(\ \dom(\Ab)^{\tup x}\ ,\ \{\{\tup u\in \dom(\Ab)^{\tup x}\colon \Ab\models \varphi(\tup u,\tup v)\}\colon \tup v\in \dom(\Ab)^{\tup y}\}\ \right).$$
If $\Cc$ is a class of $\Sigma$-structures, then we define the class of set systems $S^{\varphi}(\Cc)\coloneqq \{S^{\varphi}(\Ab)\colon \Ab\in \Cc\}$.

Note that the universe of $S^{\varphi}(\Ab)$ is $\dom(\Ab)^{\tup x}$, so the elements of $S^{\varphi}(\Ab)$ can be interpreted as tuples of elements of $\Ab$ of length $|\tup x|$.
When measuring the VC parameters of set systems $S^{\varphi}(\Ab)$ it will be convenient to somehow still regard $\dom(\Ab)$ as the universe.
Hence, we introduce the following definition: a {\em{$k$-tuple set system}} is a pair $(\Uc,\Sc)$, where $\Uc$ is a universe and $\Sc$ is a family of sets of $k$-tuples of elements of $\Uc$.
Thus, $S^{\varphi}(\Ab)$ can be regarded as an $|\tup x|$-tuple set system with universe $\dom(\Ab)$.

When $\Fc=(\Uc,\Sc)$ is a $k$-tuple set system, for a subset of elements $X\subseteq \Uc$ we define
$$\Sc\cap X\coloneqq \{S\cap X^k\colon S\in \Sc\}.$$
This naturally gives us the definition of a restriction: $\Fc[X]\coloneqq (X,\Sc\cap X)$. We may now lift all the relevant definitions --- of shattering, of the VC dimension, of the growth function, and of the VC density ---
to $k$-tuple set systems using only such restrictions: to subsets $X\subseteq \Uc$.
Note that these notions for $k$-tuple set systems are actually different from the corresponding regular notions, which would consider $\Fc$ as a set system with universe $\Uc^k$.
This is because, for instance for the VC dimension, in the regular definition we would consider shattering all possible subsets of $k$-tuples of the universe, while in the definition for $k$-tuple set systems we restrict attention
to shattering sets of the form $X^k$, where $X\subseteq \Uc$.

\subsection{MSO and transductions}

Recall that Monadic Second Order logic ($\MSO$) is an extension of the First Order logic ($\FO$) that additionally allows quantification over subsets of the domain (i.e. unary predicates), represented as {\em{monadic variables}}.
Sometimes we will also allow {\em{modular predicates}} of the form $|X|\equiv a \bmod p$, where $X$ is a monadic variable and $a,p$ are integers,
in which case the corresponding logic shall be named $\CMSO$.
If only parity predicates may be used (i.e. $p=2$), we will speak about $\CtwoMSO$ logic.

The main idea behind the proofs presented in the next sections is that we will analyze how complicated set systems one can define in $\MSO$ on specific simple structures: trees and grid graphs.
Then these results will be lifted to more general classes of graphs by means of \textit{logical transductions}.

For a logic $\Lc$ (usually a variant of $\MSO$) and a signature $\Sigma$, by $\Lc[\Sigma]$ we denote the logic comprising all $\Lc$-formulas over $\Sigma$. Then deterministic $\Lc$-transductions are defined as follows.

\begin{definition}
    Fix two relational signatures $\Sigma$ and $\Sigma' = (R_1, \ldots, R_k)$.
    A \textit{deterministic $\Lc$-transduction} $\Isf$ from $\Sigma$-structures to $\Sigma'$-structures is a sequence of $\Lc[\Sigma]$-formulas: $\gamma(x), \theta_{R_1}(\xb_1), \ldots, \theta_{R_k}(\xb_k)$, where
    the length of $\xb_i$ matches the arity of $R_i$.
\end{definition}

The semantics we associate with this definition is as follows. 
Let $\Ab$ be a $\Sigma$ structure and $D = \set{ u \colon u \in \dom(\Bb), \Bb \models \gamma(u) }$.
Then $\Isf(\Ab)$ is a $\Sigma'$ structure given by:
$$\left\langle\  D,\  \set{ \tup u_1\, \colon\, \tup u_1 \in D^{\tup x_i}, \Ab \models \theta_{R_1}(\tup x_1) },\ \ldots,\ \set{ \tup u_k\, \colon\, \tup u_k \in D^{\tup x_k}, \Ab \models \theta_{R_k}(\tup x_k) }\ \right\rangle.$$
In a nutshell, we restrict the universe of the input structure to the elements satisfying $\gamma(x)$, 
and in this new domain we reinterpret the relations of $\Sigma'$ using $\Lc[\Sigma]$-formulas evaluated in $\Ab$.

We will sometimes work with {\em{non-deterministic transductions}}, which are the following generalization.

\begin{definition}
    Fix two relational signatures $\Sigma$ and $\Sigma'$.
    A \textit{non-deterministic $\Lc$-transduction} $\Isf$ from $\Sigma$-structures to $\Sigma'$-structures is a pair consisting of: a finite signature $\Gamma(\Isf)$ consisting entirely of unary relation symbols, which
    is disjoint from $\Sigma\cup \Sigma'$; and a deterministic $\Lc$-transduction $\Isf'$ from $\Sigma\cup \Gamma(\Isf)$-structures to $\Sigma'$-structures. Transduction $\Isf'$ is called the {\em{deterministic part}} of $\Isf$.
\end{definition}

We associate the following semantics with this definition. If $\Ab$ is a $\Sigma$-structure, then by $\Ab^{\Gamma(\Isf)}$ 
we denote the set of all possible $\Sigma\cup \Gamma(\Isf)$-structures obtained by adding valuations of the unary predicates from $\Gamma(\Isf)$ to $\Ab$.
Then we define $\Isf(\Ab)\coloneqq \Isf'(\Ab^{\Gamma(\Isf)})$, which is again a set of structures.
Thus, a non-deterministic transduction $\Isf$ can be seen as a procedure that first non-deterministically selects the valuation of the unary predicates from $\Gamma(\Isf)$ in the input structure, 
and then applies the deterministic part.

If $\Cc$ is a class of $\Sigma$-structures and $\Isf$ is a transduction (deterministic or not), then by $\Isf(\Cc)$ we denote the sum of images of $\Isf$ over elements of $\Cc$.
Also, if $\Gamma$ is a signature consisting of unary relation names that is disjoint from $\Sigma$, 
then we write $\Cc^{\Gamma}\coloneqq \{\Ab^{\Gamma}\colon \Ab\in \Cc\}$ for the class of all possible $\Sigma\cup \Gamma$-structures that can be obtained from the structures from $\Cc$ by adding valuations of 
the unary predicates from $\Gamma$.

An important property of deterministic transductions is that $\MSO$ formulas working over the output structure can be ``pulled back'' to $\MSO$ formulas working over the input structure that select exactly the same tuples.
All one needs to do is add guards for all variables, ensuring that the only entities we operate on are those accepted by $\gamma(x)$, 
and replace all relational symbols of $\Sigma'$ with their respective formulas which define the transduction. This translation is formally encapsulated in the following result.

\begin{lemma}[Backwards Translation Lemma, cf.~\cite{Courcelle94-mso-transduction-survey}]\label{lem:btl}
    Let $\Isf$ be a deterministic transduction from $\Sigma$-structures to $\Sigma'$-structures, and let $\Lc\in \{\MSO,\CMSO,\CtwoMSO\}$.
    Then for every $\Lc[\Sigma']$-formula $\varphi(\tup x)$ there is an $\Lc[\Sigma]$-formula $\psi(\tup x)$ such that for every $\Sigma$-structure $\Ab$ and $\tup u\in \dom(\Ab)^{\tup x}$,
    $$\Ab\models \psi(\tup u)\qquad\qquad \textrm{if and only if}\qquad\qquad \tup u\in \dom(\Isf(\Ab))^{\tup x}\ \textrm{ and }\ \Isf(\Ab)\models \varphi(\tup u).$$
\end{lemma}

The formula $\psi$ provided by Lemma~\ref{lem:btl} will be denoted by $\Isf^{-1}(\varphi)$.


Finally, we remark that in the literature there is a wide variety of different notions of logical transductions and interpretations; we chose one of the simplest, as it will be sufficient for our needs.
We refer a curious reader to a survey of Courcelle~\cite{Courcelle94-mso-transduction-survey}.

\subsection{MSO on graphs}

We will work with two variants of $\MSO$ on graphs: $\MSO_1$ and $\MSO_2$.
Both these variants are defined as the standard notion of $\MSO$ logic, but applied to two different encodings of graphs as relational structures.
When we talk about $\MSO_1$-formulas, we mean $\MSO$-formulas over structures representing graphs as follows: elements of the structure correspond to vertices and there is a single binary relation representing adjacency.
The second variant, $\MSO_2$, encompasses $\MSO$-formulas over structures representing graphs as follows: the domain contains both edges and vertices of the graph, and there is a binary incidence relation that selects all pairs $(e,u)$ such that $e$ is
an edge and $u$ is one of its endpoints.
These two encodings of graphs will be called the {\em{adjacency encoding}} and the {\em{incidence encoding}}, respectively.

Thus, practically speaking, in $\MSO_1$ we may only quantify over subsets of vertices, while in $\MSO_2$ we allow quantification both over subsets of vertices and over subsets of edges.
$\MSO_2$ is strictly more powerful than $\MSO_1$, for instance it can express that a graph is Hamiltonian.
We may extend $\MSO_1$ and $\MSO_2$ with modular predicates in the natural way, thus obtaining logic $\CMSO_1$, $\CtwoMSO_1$, etc.

If $G$ is a graph and $\varphi(\tup x,\tup y)$ is an $\Lc$-formula over graphs, where $\Lc$ is any of the variants of $\MSO$ discussed above, then we may define the $|\tup x|$-tuple set system $S^{\varphi}(G)$ as before, where
the universe of $S^{\varphi}(G)$ is the vertex set of $G$.
We remark that in case of $\MSO_2$, despite the fact that formally an $\MSO_2$-formula works over a universe consisting of both vertices and edges, 
in the definition of $S^{\varphi}(G)$ we consider only the vertex set $V$ as the universe. That is, the parameter variables $\tup y$ range over $V$ and each evaluation $\tup v\in V^{\tup y}$ defines
the set of evaluations $\tup u\in V^{\tup x}$ satisfying $G\models \varphi(\tup u,\tup v)$ which is included in $S^{\varphi}(G)$.

\subsection{MSO and tree automata}

When proving upper bounds we will use the classic connection between $\MSO$ and tree automata.
Throughout this paper, all trees will be finite, rooted, and binary: every node may have a left child and a right child, though one or both of them may be missing. 
Trees will be represented as relational structures where the domain consists of the nodes and there are two binary relations, respectively encoding being a left child and a right child.
In case of labeled trees, the signature is extended with a unary predicate for each label.

\begin{definition}
    Let $\Sigma$ be a finite alphabet. 
    A \textit{(deterministic) tree automaton} is a tuple $(Q, F, \delta)$ where $Q$ is a finite set of states, $F$ is a subset of $Q$ denoting the accepting states, 
    while $\delta \colon (Q \cup \set{\bot})^2 \times \Sigma \to Q$ is the transition function.
\end{definition}

A {\em{run}} of a tree automaton $\Ac = (Q, F, \delta)$ over a $\Sigma$-labeled tree $T$ is the labeling of its nodes $\rho \colon V(T) \to Q$ which is computed in a bottom-up manner using the transition function.
That is, if a node $v$ bears symbol $a\in \Sigma$ and the states assigned by the run to the children of $v$ are $q_1$ and $q_2$, respectively, then the state assigned to $v$ is $\delta(q_1,q_2,a)$.
In case $x$ has no left or right child, the corresponding state $q_t$ is replaced with the special symbol $\bot$.
In particular, the state in every leaf is determined as $\delta(\bot,\bot,a)$, where $a\in \Sigma$ is the label of the leaf.
We say that a tree automaton $\Ac$ {\em{accepts}} a finite tree $T$ if $\rho(\text{root}(T)) \in F$.

The following statement expresses the classic equivalence of $\CMSO$ and finite automata over trees.

\begin{lemma}[\cite{Rabin69-s2s-decidability}]
    For every $\CMSO$ sentence $\varphi$ over the signature of $\Sigma$-labeled trees there exists a tree automaton $\Ac_\varphi$ which is equivalent to $\varphi$ in the following sense:
    for every $\Sigma$-labeled tree $T$, $T \models \varphi$ if and only if $\Ac_\varphi$ accepts $T$.
\end{lemma}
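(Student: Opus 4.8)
The plan is to prove the statement by structural induction on $\varphi$, after first putting it into a convenient normal form. A sentence has no free variables, but the induction naturally passes through subformulas with free variables, so I would instead prove the more general statement: for every $\CMSO$ formula $\psi(X_1,\dots,X_k)$ over $\Sigma$-labeled trees whose free variables are all monadic, there is a deterministic tree automaton over the extended alphabet $\Sigma \times \{0,1\}^k$ that accepts exactly those trees which, read together with a valuation of $X_1,\dots,X_k$ encoded in the $\{0,1\}^k$-coordinates, satisfy $\psi$. First-order variables are eliminated up front by replacing each such variable $x$ with a monadic variable $X$ together with the $\MSO$-expressible conjunct asserting that $X$ is a singleton, and by rewriting the atomic formulas accordingly; this reduces everything to the monadic fragment. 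Taking $k=0$ at the end recovers the sentence case.

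For the base of the induction I would exhibit automata for the atomic formulas: the left-child and right-child relations and the label predicates are checked by a trivial finite-memory pass, and formulas such as ``$X$ is a singleton'' or $X \subseteq Y$ are equally easy. The modular predicate $|X| \equiv a \bmod p$ is the one genuinely new ingredient relative to plain $\MSO$: the automaton keeps in its state the cardinality of $X$ restricted to the subtree rooted at the current node, taken modulo $p$, updating it additively at each node and accepting iff the value at the root equals $a$. This needs only $p$ states, and it is exactly this feature that makes the equivalence go through for $\CMSO$ rather than just $\MSO$.

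For the inductive step I would invoke the standard effective closure properties of regular tree languages. Conjunction and disjunction correspond to a product construction; existential quantification $\exists X_i.\,\psi$ corresponds to projecting away the $i$-th bit-coordinate of the alphabet; and negation corresponds to complementation. The main obstacle, and the only delicate point, is complementation: it is not immediate for tree automata, and the standard route is to first determinize via the bottom-up subset construction, which turns a nondeterministic automaton with state set $Q$ into a deterministic one with state set $2^Q$, and only then swap accepting and non-accepting states. Since projection in general destroys determinism even when one starts from a deterministic automaton, after each existential quantification I would re-determinize, so as to maintain the invariant that the automaton produced at every node of the syntax tree of $\varphi$ is deterministic, as the statement requires. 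Thus the proof rests on the facts that deterministic and nondeterministic bottom-up tree automata recognize the same languages and that this class is effectively closed under union, intersection, complement, and coordinate projection; assembling these constructions bottom-up along the syntax tree of $\varphi$ yields the desired automaton $\Ac_\varphi$.
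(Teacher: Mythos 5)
The paper does not prove this statement; it is imported as a known result (cited to Rabin, and in essence the Thatcher--Wright/Doner theorem extended to counting quantifiers), so there is no in-paper argument to compare against. Your proof is the standard one and is correct: eliminating first-order variables via singleton set variables, strengthening the induction hypothesis to formulas with free monadic variables encoded in the alphabet $\Sigma\times\{0,1\}^k$ (which is in fact exactly the paper's subsequent Lemma~8), handling the modular predicates with $p$ counting states, and closing under product, projection, and complementation-after-bottom-up-determinization, re-determinizing after each projection. The only point worth stating explicitly is that bottom-up determinization via the subset construction is what makes this work for the paper's deterministic automaton model --- top-down deterministic tree automata would not suffice --- and you have identified that correctly.
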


Since we are actually interested in formulas with free variables and not only sentences, we will need to change this definition slightly.
Informally speaking, we will enlarge the alphabet in a way which allows us to encode valuations of the free variables.
Let $T$ be a $\Sigma$-labelled tree and consider a tuple of variables $\xb$ along with its valuation $\tup u \in V(T)^{\xb}$.
Then we can encode $\tup u$ in $T$ by defining the {\em{augmented tree}} $T_{\tup a}$ as follows: $T_{\tup a}$ is the tree with labels from $\Sigma \times \set{0, 1}^{\xb}$ that is obtained from $T$
by enriching the label of every node $v$ with the function $f_v\in \set{0,1}^{\xb}$ defined as follows: for $x\in \tup x$, we have $f_v(x)=1$ if and only if $v=\tup u(x)$.
As observed by Grohe and Tur\'an~\cite{Grohe04-mso-tree-vc-dim}, $\CMSO$ formulas can be translated to equivalent tree automata working over augmented trees.

\begin{lemma}[\cite{Grohe04-mso-tree-vc-dim}]
    \label{lem:grohe-mso-to-automaton}
    For every $\CMSO$ formula $\varphi(\xb)$ over the signature of $\Sigma$-labeled trees there exists a tree automaton $\Ac_\varphi$ over $\Sigma \times \set{0, 1}^{\xb}$-labelled trees 
    which is equivalent to $\varphi(\xb)$ in the following sense: for every $\Sigma$-labelled tree $T$ and $\tup{u}\in V(T)^{\tup x}$, $T \models \varphi(\tup u)$ if and only if $\Ac_\varphi$ accepts $T_{\tup u}$.
\end{lemma}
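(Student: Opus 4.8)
The plan is to deduce the lemma from the equivalence between $\CMSO$-sentences and tree automata recalled earlier (Rabin's theorem), by eliminating the free first-order variables $\xb$ in the standard way: each variable is simulated by a fresh unary predicate that is forced to be true at exactly one node. Concretely, I would introduce a fresh unary relation symbol $X_x$ for every $x\in\xb$, set $\Gamma \coloneqq \set{X_x \colon x \in \xb}$, and form the $\CMSO$-sentence
$$\widehat\varphi \ \coloneqq\ \exists \xb\ \left(\ \bigwedge_{x\in\xb}\ \forall z\,(z\in X_x \leftrightarrow z = x)\ \wedge\ \varphi(\xb)\ \right)$$
over the signature of $\Sigma$-labelled trees enriched by the predicates of $\Gamma$. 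I would remark that $\widehat\varphi$ is again a $\CMSO$-sentence, since we only added first-order quantifiers and first-order atoms, leaving any modular predicates occurring in $\varphi$ untouched.

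Next I would use that a $\Sigma$-labelled tree together with a valuation of the predicates of $\Gamma$ is literally the same object as a $\Sigma'$-labelled tree for the product alphabet $\Sigma' \coloneqq \Sigma \times \set{0,1}^{\xb}$, the $\set{0,1}^{\xb}$-coordinate of a node's label recording its membership in each $X_x$. Under this identification the atoms $P_a(z)$ (for $a\in\Sigma$) and $X_x(z)$ become finite disjunctions over the label predicates of $\Sigma'$, so $\widehat\varphi$ turns into an equivalent $\CMSO$-sentence over $\Sigma'$-labelled trees; applying Rabin's theorem to it yields a tree automaton, which I take to be $\Ac_\varphi$. For correctness, I would fix a $\Sigma$-labelled tree $T$ and $\tup u\in V(T)^{\xb}$, and observe that the augmented tree $T_{\tup u}$ from the excerpt is exactly the $\Sigma'$-labelled tree which, read back as a $(\Sigma\cup\Gamma)$-structure, valuates each $X_x$ by the singleton $\set{\tup u(x)}$. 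On such a structure the conjunct $\forall z\,(z\in X_x\leftrightarrow z=x)$ forces $x = \tup u(x)$, whence $T_{\tup u}\models\widehat\varphi$ if and only if $T\models\varphi(\tup u)$; composing with the equivalence of $\widehat\varphi$ and $\Ac_\varphi$ gives the claim. One also gets for free that $\Ac_\varphi$ rejects every $\Sigma'$-labelled tree not of the form $T_{\tup u}$, since there some $X_x$ is marked at a non-singleton set.

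I do not expect a genuine obstacle: the content is the standard trick of simulating free first-order variables by singleton-valued unary predicates. The only thing needing care is keeping the three isomorphic viewpoints aligned — $\varphi$ as a formula with free variables, $\widehat\varphi$ as a sentence over the enriched signature $\Sigma\cup\Gamma$, and $\widehat\varphi$ as a sentence over the product alphabet $\Sigma\times\set{0,1}^{\xb}$ — and making sure the singleton constraint is present, so that the existential quantifiers of $\widehat\varphi$ are effectively deterministic and the product-alphabet encoding of a valuation coincides precisely with the augmented tree $T_{\tup u}$.
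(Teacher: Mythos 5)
Your argument is correct and is the standard reduction: encode each free first-order variable by a singleton-valued unary predicate, view the enriched structure as a tree over the product alphabet $\Sigma\times\set{0,1}^{\xb}$, and apply the sentence-level equivalence of $\CMSO$ and tree automata. The paper states this lemma as a cited result of Grohe and Tur\'an without reproducing a proof, and what you have written is precisely the argument that underlies it, so there is nothing of substance to compare; the one cosmetic point is that your correctness claim only needs to be checked on trees of the form $T_{\tup u}$, so the final remark about rejection of ill-formed trees is a harmless extra.
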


\newcommand{\bl}{\diamond}

\section{Upper bounds}

In this section we prove \cref{thm:upper-bounds}.
We start with investigating the case of $\CMSO$-definable set systems in trees. 
This case will be later translated to the case of classes with bounded treewidth or cliquewidth by means of $\CMSO$-transductions.

\subsection{Trees}

Recall that labelled binary trees are represented as structures with domains containing their nodes, two successor relations---one for the left child, and one for the right---and unary predicates for labels.
It turns out that $\CMSO$-definable set systems over labelled trees actually admit optimal upper bounds for VC density. This improves the result of Grohe and Tur\'an~\cite{Grohe04-mso-tree-vc-dim} showing that such set systems
have bounded VC dimension.

\begin{theorem}
    \label{thm:vc-density-in-trees}
    Let $\Cc$ be a class of finite binary trees with labels from a finite alphabet $\Sigma$, and $\varphi(\xb, \yb)$ be a partitioned $\CMSO$-formula over the signature of $\Sigma$-labeled binary trees. 
    Then there is a constant $c\in \N$ such that for every tree $T\in \Cc$ and a non-empty subset of its nodes $A$, we have
    $$|S^\varphi(T)[A]|\leq c\cdot |A|^{|\yb|}.$$
\end{theorem}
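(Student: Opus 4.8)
The plan is to combine the translation of $\CMSO$ into tree automata (\cref{lem:grohe-mso-to-automaton}) with a combinatorial analysis of how a fixed tree automaton processes a tree in which only a bounded number of nodes are marked. Viewing $\varphi(\xb,\yb)$ as a formula whose free object variables form a single tuple $\tup z$ of length $|\xb|+|\yb|$, \cref{lem:grohe-mso-to-automaton} yields a deterministic tree automaton $\Ac=(Q,F,\delta)$ over $\Sigma\times\set{0,1}^{\tup z}$-labelled trees such that, for all $T\in\Cc$, $\tup u\in V(T)^{\xb}$ and $\tup v\in V(T)^{\yb}$, $T\models\varphi(\tup u,\tup v)$ iff $\Ac$ accepts the marked tree $T_{\tup u,\tup v}$. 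Set $q=|Q|$; this is a constant depending only on $\varphi$ and $\Sigma$. Fix $T\in\Cc$ and a non-empty $A\subseteq V(T)$. Since $|S^\varphi(T)[A]|$ is the number of distinct subsets $\set{\tup u\in A^{\xb}\colon \Ac\text{ accepts }T_{\tup u,\tup v}}$ of $A^{\xb}$ as $\tup v$ ranges over $V(T)^{\yb}$, it suffices to construct a map $\mathrm{tp}\colon V(T)^{\yb}\to\Gamma$ into a set $\Gamma$ with $|\Gamma|\le c\cdot|A|^{|\yb|}$, satisfying: if $\mathrm{tp}(\tup v)=\mathrm{tp}(\tup v')$, then $\Ac$ accepts $T_{\tup u,\tup v}$ iff it accepts $T_{\tup u,\tup v'}$, for every $\tup u\in A^{\xb}$.

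To define $\mathrm{tp}$, let $\hat A$ be the closure of $A\cup\set{\mathrm{root}(T)}$ under taking lowest common ancestors of pairs, so $|\hat A|\le 2|A|+1$; call a \emph{segment} a maximal path of $T$ whose two endpoints lie in $\hat A$ but none of whose internal nodes does, and note there are at most $|\hat A|$ segments (one per lower endpoint). Partition $V(T)$ into $\Oh(|A|)$ \emph{regions}: one region for every $v\in\hat A$, consisting of $v$ together with the subtrees of $T$ hanging off $v$ that avoid $\hat A$; and one region for every segment $S$, consisting of the internal nodes of $S$ together with the subtrees hanging off those internal nodes. Folding the subtrees hanging off a segment into that segment's region is precisely what keeps the number of regions $\Oh(|A|)$. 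Since every coordinate of every $\tup u\in A^{\xb}$ lies in $\hat A$, running $\Ac$ on $T_{\tup u,\tup v}$ never places an $\xb$-marker inside a segment or inside a hanging subtree; hence the bottom-up run of $\Ac$ summarises each such part by data independent of $\tup u$ --- the state reached at the root of a hanging subtree, and the map $Q\to Q$ describing how an entire segment (together with all its hanging subtrees) transforms the state flowing into it from below --- each of which depends only on $T$ and on the $\yb$-markers of $\tup v$ lying inside that part. Now let $\mathrm{tp}(\tup v)$ record: (a) the region of each coordinate of $\tup v$; (b) for each segment containing a $\yb$-marker, its associated map $Q\to Q$; (c) for each node of $\hat A$ containing a $\yb$-marker (at the node itself or inside a hanging subtree off it), the corresponding state summary together with the marker bits present at that node. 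Every datum in (b) and (c) ranges over a set of size $q^{\Oh(1)}$, and which segments and $\hat A$-nodes are ``active'' is determined by (a); hence $|\Gamma|\le \Oh(|A|)^{|\yb|}\cdot q^{\Oh(|\yb|)}=c\cdot|A|^{|\yb|}$, with $c$ depending only on $q$ and $|\yb|$.

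It remains to check that $\mathrm{tp}(\tup v)$ and $\tup u\in A^{\xb}$ together determine whether $\Ac$ accepts $T_{\tup u,\tup v}$: one processes $T_{\tup u,\tup v}$ bottom-up along the subtree spanned by $\hat A$, and at each node $p\in\hat A$ the state equals $\delta$ applied to (i) the states delivered by the at most two children of $p$ --- each being either a recorded hanging-subtree summary from (c), or a segment map from (b) applied to the state computed recursively at the nearest $\hat A$-node below --- and (ii) the label of $p$, its $\xb$-marker bits (read off from $\tup u$, which takes values in $\hat A$), and its $\yb$-marker bits (available from (a)); so the state at $\mathrm{root}(T)$, and hence acceptance, is a function of $\mathrm{tp}(\tup v)$ and $\tup u$ alone, and parameter tuples of equal type yield equal subsets of $A^{\xb}$. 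This gives $|S^\varphi(T)[A]|\le|\Gamma|\le c\cdot|A|^{|\yb|}$. The routine parts are the case analysis underlying the region decomposition and the inductive gluing just sketched; the step that needs care is the compression in (b)--(c), namely that collapsing all $\yb$-markers inside one region into a single finite summary loses nothing relevant to acceptance, which is valid exactly because no $\xb$-marker can enter that region --- and this is what forces the only unbounded choice ($\Oh(|A|)$ options for a region) to be made once per variable of $\yb$, yielding the sharp exponent $|\yb|$.
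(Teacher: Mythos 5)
Your proposal is correct and follows essentially the same route as the paper's proof: translate $\varphi$ to a tree automaton, contract $T$ onto the LCA-closure of $A\cup\{\mathrm{root}\}$ (your regions/segments are the fibers $\phi^{-1}(u)$ of the paper's contraction map), summarize each contracted piece by a constant-size state transformation that depends on $\tup v$ only through the $\yb$-markers landing in that piece, and count $\Oh(|A|)^{|\yb|}$ placements times a constant number of summaries per affected piece. The only differences are cosmetic (you keep the acceptance computation explicit rather than packaging it as a derived automaton on the contracted tree, and you split each fiber into a node region and a segment region).
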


\begin{proof}
    By \cref{lem:grohe-mso-to-automaton}, $\varphi(\xb, \yb)$ is equivalent to a tree automaton $\Ac = (Q, F, \delta)$ over an alphabet of $\Sigma \times \set{0, 1}^{\xb} \times \set{0, 1}^{\yb}$.
    We will now investigate how the choice of parameters $\yb$ can affect the runs of $\Ac$ over $T$.

    Since we are really considering $T$ over the alphabet extended with binary markers for $\xb$ and $\yb$, we will use $T_{\bl\bl}$ to denote the extension of the labeling of $T$ where all binary markers are set to $0$.
    That is, $T_{\bl\bl}$ is the tree labeled with alphabet $\Sigma \times \set{0, 1}^{\xb} \times \set{0, 1}^{\yb}$ obtained from $T$ by extending each symbol appearing in $T$ with functions that map all variables of $\xb$
    and $\yb$ to $0$. Tree $T_{\bl\bar{q}}$ is defined analogously, where the markers for $\yb$ are set according to the valuation $\tup q$, while the markers for $\xb$ are all set to $0$.

    In $T$ we have natural ancestor and descendant relations; we consider every node its own ancestor and descendant as well.
    Let $B$ be the subset of nodes of $T$ that consists of:
    \begin{itemize}[nosep]
     \item the root of $T$;
     \item all nodes of $A$; and
     \item all nodes $u\notin A$ such that both the left child and right child of $u$ have a descendant that belongs to $A$.
    \end{itemize}
    Note that $|B| \leq 1+|A| + (|A| - 1) = 2 |A|$. For convenience, let $\phi\colon V(T)\to B$ be a function that maps every node $u$ of $T$ to the least ancestor of $u$ that belongs to $B$.
    
    We define a tree $T'$ with $B$ as the set of nodes as follows.
    A node $v\in B$ is the left child of a node $u\in B$ in $T'$
    if the following holds in $T$: $v$ is a descendant of the left child of $u$ and no internal vertex on the unique path from $u$ to $v$ belongs to $B$.
    Note that every node $u\in B$ has at most one left child in $T'$, for if it had two left children $v,v'$, then the least common ancestor of $v$ and $v'$ would belong to $B$ and would be an internal vertex on
    both the $u$-to-$v$ path and the $u$-to-$v'$ path.
    The right child relation in $T'$ is defined analogously.
    The reader may think of $T'$ as of $T$ with $\phi^{-1}(u)$ contracted to $u$, for every $u\in B$; see \cref{fig:contract}.
    
    \begin{figure}[h]
        \centering
        \includegraphics[height=4cm]{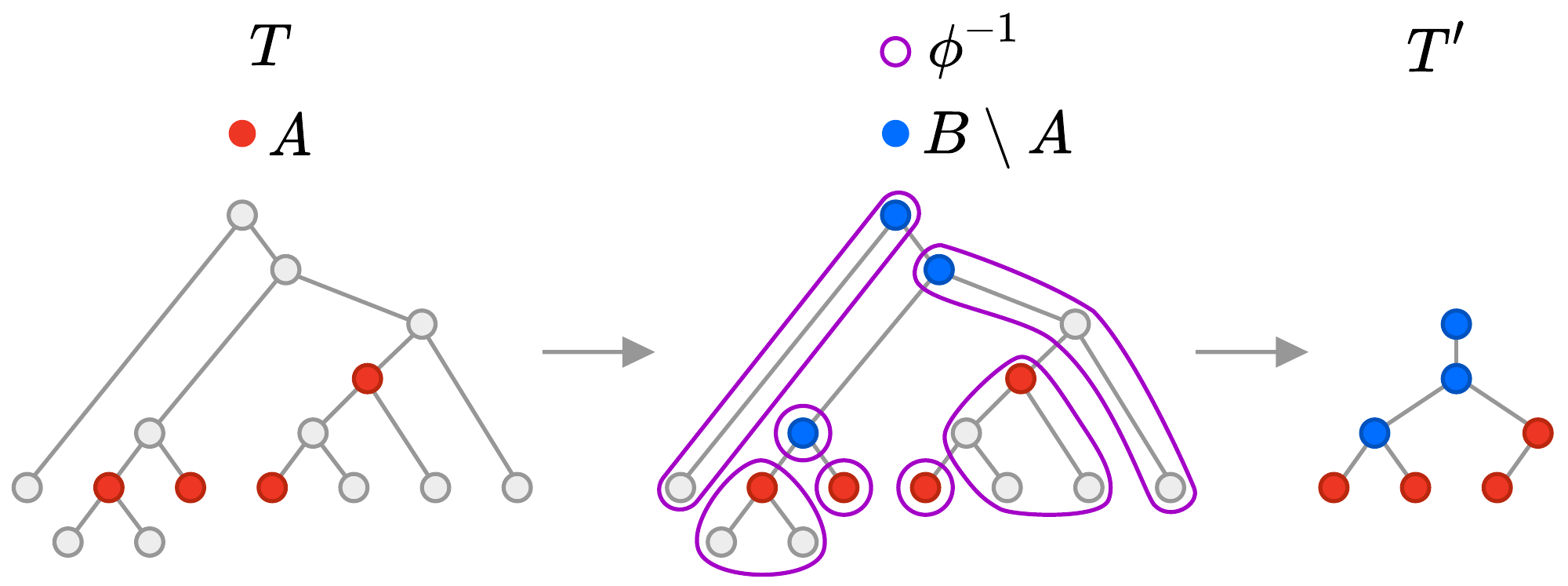}
        \caption{Definitions of $B$, $\phi$, and $T'$.}\label{fig:contract}
    \end{figure}

    Note that we did not define any labeling on the tree $T'$. Indeed, we treat $T'$ as an unlabeled tree, but will consider different labelings of $T'$ induced by various augmentations of $T$.
    For this, we define alphabet
    $$\Delta = \set{0, 1}^{\xb} \to \left((Q^2 \to Q) \cup (Q \to Q) \cup Q\right),$$
    where $X\to Y$ denotes the set of functions from $X$ to $Y$.     
    Now, for a fixed valuation of parameter variables $\tup q\in V(T)^{\yb}$ and object variables $\tup p\in V(T)^{\xb}$, we define the $\Delta$-labeled tree $T'_{\tup q}$ as follows.
    Consider any node $u\in B$ and let $T_{\tup p\tup q}[u]$ be the {\em{context}} of $u$: a tree obtained from $T_{\tup p\tup q}$ by restricting it to the descendants of $u$, and, for every child $v$ of $u$ in~$T'$,
    replacing the subtree rooted at $v$ by a single special node called a {\em{hole}}. 
    The automaton $\Ac$ can be now run on the context $T_{\tup p\tup q}[u]$ provided that for every hole of $T_{\tup p\tup q}[u]$ we prescribe a state to which this hole should evaluate.
    Thus, running $\Ac$ on $T_{\tup p\tup q}[u]$ defines a state transformation $\delta'_{\tup p\tup q}[u]$, 
    which maps tuples of states assigned to the holes of $T_{\tup p\tup q}[u]$ to the state assigned to $u$.
    Intuitively, $\delta'_{\tup p\tup q}[u]$ encodes the compressed transition function of $\Ac$ when run over the subtree of $T_{\tup p\tup q}$ induced by $\phi^{-1}(u)$, where it is assumed that on the input we are given 
    the states to which the children of $u$ in $T'$ are evaluated. 
    Note that the domain of $\delta'_{\tup p\tup q}[u]$ consists of pairs of states if $u$ has two children in $T'$, of one state if $u$ has one child in $T'$, and of zero states if $u$ has no children in $T'$.
    Thus $$\delta'_{\tup p\tup q}[u]\in ((Q^2 \to Q) \cup (Q \to Q) \cup Q).$$
    Note that for fixed $\tup q$ and $u$, $\delta'_{\tup p\tup q}[u]$ is uniquely determined by the subset of variables of $\tup x$ that $\tup p$ maps to~$u$. This is because $\tup p\in A^{\tup x}$, while
    $u$ is the only node of $\phi^{-1}(u)$ that may belong to $A$.
    Hence, with $u$ we can associate a function $f_u\in \Delta$ that given $\tup t\in \set{0,1}^{\xb}$,
    outputs the transformation $\delta'_{\tup p\tup q}[u]$ for any (equivalently, every) $\tup p\in A^{\tup x}$ satisfying $\tup t(x)=1$ iff $\tup p(x)=u$, for all $x\in \tup x$.
    Then we define the $\Delta$-labeled tree $T'_{\tup q}$ as $T'$ with labeling $u\mapsto f_u$.
    Note that the above construction can be applied to $\tup q=\bl$ in the same way.
    
    Now, for $\tup p\in A^{\tup x}\cup \{\bl\}$ we define the $\Delta\times \set{0,1}^{\tup x}$-labeled tree $(T'_{\tup q})_{\tup p}$ by augmenting $T'_{\tup q}$ with markers for the valuation $\tup p$;
    note that this is possible because $A$ is contained in the node set of $T'$. We also define an automaton $\Ac'$ working on $\Delta\times \set{0,1}^{\tup x}$-labeled trees as follows.
    $\Ac'$ uses the same state set as $\Ac$, while its transition function is defined by taking the binary valuation for $\xb$ in a given node $u$, 
    applying it to the $\Delta$-label of $u$ to obtain a state transformation, verifying that the arity of this transformation matches the number of children of $u$, 
    and finally applying that transformation to the input states. Then the following claim follows immediately from the construction.
    
    \begin{claim}\label{cl:emulate}
     For all $\tup p\in A^{\tup x}\cup \{\bl\}$ and $\tup q\in B^{\tup y}\cup \{\bl\}$, the run of $\Ac'$ on $(T'_{\tup q})_{\tup p}$ is equal to the restriction of the run of $\Ac$ on $T_{\tup p\tup q}$ to the nodes of $B$.
    \end{claim}

    From Claim~\ref{cl:emulate} it follows that if for two tuples $\tup q,\tup q'$ we have $T'_{\tup q}=T'_{\tup q'}$, 
    then for every $\tup p\in A^{\tup x}$, $\Ac$ accepts $T_{\tup p\tup q}$ if and only if $\Ac$ accepts $T_{\tup p\tup q'}$.
    As $\Ac$ is equivalent to the formula $\varphi(\tup x,\tup y)$ in the sense of \cref{lem:grohe-mso-to-automaton}, this implies that 
    $$\{ \tup p\in A^{\tup x}\colon T\models \varphi(\tup p,\tup q)\} = \{ \tup p\in A^{\tup x}\colon T\models \varphi(\tup p,\tup q')\}.$$
    In other words, $\tup q$ and $\tup q'$ define the same element of $S^{\varphi}(T)[A]$. 
    We conclude that the cardinality of $S^{\varphi}(T)[A]$ is bounded by the number of different trees $T'_{\tup q}$ that one can obtain by choosing different $\tup q\in V(T)^{\tup y}$.
    
    Observe that for each $\tup q\in V(T)^{\tup y}$, tree $T'_{\tup q}$ differs from $T'_{\bl}$ by changing the labels of at most $|\tup y|$ nodes.
    Indeed, from the construction of $T'_{\tup q}$ it follows that for each $u\in B$, the labels of $u$ in $T'_{\tup q}$ and in $T'_{\bl}$ 
    may differ only if $\tup q$ maps some variable of $\tup y$ to a node belonging to $\phi^{-1}(u)$;
    this can happen for at most $|\tup y|$ nodes of~$B$.
    Recalling that $|B| \leq 2|A|$ and $|\Delta| \leq |Q|^{2^{|\tup x|}\cdot (|Q|^2+|Q|+1)}$, the number of different trees $T'_{\tup q}$ is bounded by
    $$\sum_{i=0}^{|\yb|} {|B| \choose i} \cdot \left(|Q|^{2^{|\tup x|}\cdot (|Q|^2+|Q|+1)}\right)^{|\yb|} \leq c\cdot |A|^{|\yb|},$$
    where $c\coloneqq 2^{|\yb|}\cdot (|\yb|+1)\cdot \left(|Q|^{2^{|\tup x|}\cdot (|Q|^2+|Q|+1)}\right)^{|\yb|}$.
    As argued, this number is also an upper bound on the cardinality of $S^\varphi(T)[A]$, which concludes the proof.
\end{proof}

\subsection{Classes with bounded treewidth or cliquewidth}

We now exploit the known connections between trees and graphs of bounded treewidth or cliquewidth, expressed in terms of the existence of suitable $\MSO$-transductions, 
to lift \cref{thm:vc-density-in-trees} to more general classes of graphs, thereby proving \cref{thm:upper-bounds}.
In fact, we will not rely on the original combinatorial definitions of these parameters, but on their logical characterizations proved in subsequent works.

The first parameter of interest is the {\em{cliquewidth}} of a graph, introduced by Courcelle and Olariu~\cite{CourcelleO00}.
We will use the following well-known logical characterization of cliquewidth.

\begin{theorem}[\cite{CourcelleE95,EngelfrietO97}]
  \label{thm:cw-tree-interpretable}
  For every $k \in \N$ there is a finite alphabet $\Sigma_k$ and a deterministic $\MSO$-transduction $\Isf_k$ such that for every graph $G$ of cliquewidth at most $k$ 
  there exists a $\Sigma_k$-labeled binary tree $T$ satisfying the following: $\Isf_k(T)$ is the adjacency encoding of $G$.
\end{theorem}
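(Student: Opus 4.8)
The plan is to unfold the combinatorial definition of cliquewidth, represent a $k$-expression for $G$ as the tree $T$, and then exhibit the transduction $\Isf_k$ explicitly by two $\MSO$ formulas: one selecting the vertices of $G$ among the nodes of $T$, and one defining adjacency. Recall that $G$ has cliquewidth at most $k$ if and only if $G$ is the value of a $k$-expression, that is, a term built from the nullary constants $i$ (creating a single vertex with label $i\in[k]$), the binary disjoint union $\oplus$, the unary relabelings $\rho_{i\to j}$, and the unary edge-additions $\eta_{i,j}$ (adding all edges between the current label-$i$ and label-$j$ vertices), where $i\neq j$. Since every operation has arity at most two, such a term is naturally a binary tree. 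I would let $\Sigma_k$ be the finite set of all these operation symbols (its cardinality depends only on $k$) and take $T$ to be this expression tree, giving every unary node a single left child. The leaves of $T$ are exactly the constant nodes, and they are in bijection with $V(G)$.

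For the transduction itself, I would set $\gamma(x)$ to select the leaves, so that $\dom(\Isf_k(T))$ is identified with $V(G)$; it then remains to write the adjacency formula $\theta_E(x,y)$. Its engine is a label-tracking formula $\mathrm{Lab}_\ell(x,z)$, asserting that leaf $x$ is a descendant of node $z$ and that in the graph produced by the subtree rooted at $z$ the vertex $x$ carries label $\ell$. The key observation is that the label of $x$ at $z$ is obtained by starting from $x$'s creation label and applying, in order, the relabelings met on the path from $x$ up to $z$ --- a deterministic finite-state computation along that path. Such a computation is $\MSO$-definable: I would existentially quantify monadic variables $X_1,\dots,X_k$ that partition the nodes of the path from $x$ to $z$, with $X_m$ collecting the nodes where $x$'s current label is $m$, and then check by a first-order formula that $x$ lies in the part named by its creation label, that each parent step updates the label correctly according to the operation symbol (only $\rho$-nodes change it), and that $z\in X_\ell$. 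The ancestor relation used here is itself $\MSO$-definable as the transitive closure of the two child relations.

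With label tracking in hand, I would define
$$\theta_E(x,y)\coloneqq x\neq y \wedge \exists z\,\Big(\mathrm{anc}(z,x)\wedge \mathrm{anc}(z,y)\wedge \bigvee_{i\neq j}\big(\eta_{i,j}(z)\wedge \mathrm{Lab}_i(x,z)\wedge \mathrm{Lab}_j(y,z)\big)\Big),$$
where $\eta_{i,j}(z)$ is the corresponding label predicate of $T$ and the disjunction over ordered pairs makes $\theta_E$ symmetric. The combinatorial claim to verify is that this formula captures $E(G)$ exactly: since no operation deletes edges, and since two vertices first coexist in a common subgraph only at their least common ancestor (a $\oplus$-node), an edge $xy$ is present in $G$ precisely when some $\eta_{i,j}$-node $z$ on the path from that least common ancestor to the root sees $x$ and $y$ with labels $\{i,j\}$ at the moment it fires --- which is exactly what $\theta_E$ expresses. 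As the entire construction of $\Sigma_k$, $\gamma$, and $\theta_E$ depends only on $k$, the transduction $\Isf_k$ is fixed and works uniformly for every $G$ of cliquewidth at most $k$. The main obstacle I expect is the faithful $\MSO$-encoding of label tracking together with the combinatorial verification that edge persistence plus the least-common-ancestor observation make $\theta_E$ capture adjacency precisely; the transduction machinery and the ancestor predicate are routine by comparison.
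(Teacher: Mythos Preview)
The paper does not supply its own proof of this theorem: it is quoted as a known result with citations to Courcelle--Engelfriet and Engelfriet--Oum and is used as a black box in the derivation of \cref{thm:upper-bounds}. There is therefore nothing in the paper to compare against.

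That said, your sketch is the standard argument and is essentially correct. Representing a $k$-expression as a $\Sigma_k$-labeled binary tree whose leaves are the vertex-creation constants, selecting leaves via $\gamma$, and recovering adjacency by existentially locating an $\eta_{i,j}$-ancestor at which the two leaves carry the appropriate labels is exactly how the cited works proceed. The label-tracking subformula $\mathrm{Lab}_\ell(x,z)$ via an existentially quantified partition $X_1,\dots,X_k$ of the $x$--$z$ path, with local consistency conditions at each parent step, is the usual $\MSO$ encoding of a finite-state run along a tree path; the ancestor relation is $\MSO$-definable as you note. Your correctness argument for $\theta_E$ is also right: any common ancestor $z$ of $x$ and $y$ lies at or above their least common ancestor (which is a $\oplus$-node), so both vertices are present in the subgraph at $z$, and since no operation removes edges, $xy\in E(G)$ iff some such $\eta$-node fires with matching labels. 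One cosmetic point: be explicit that unary-operation nodes get a single (say, left) child and no right child, so that the structure conforms to the paper's binary-tree signature; you already mention this, and it causes no difficulty.
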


Thus, one may think of graphs of bounded cliquewidth as of graphs that are $\MSO$-interpretable in labeled trees. By combining \cref{thm:cw-tree-interpretable} with \cref{thm:vc-density-in-trees}
we can prove part (i) of \cref{thm:upper-bounds} as follows. 

Fix a class $\Cc$ with uniformly bounded cliquewidth and a partitioned $\CMSO$-formula $\varphi(\xb,\yb)$ over the signature of $\Cc$.
Let $k$ be the upper bound on the cliquewidth of graphs from $\Cc$, and let $\Sigma_k$ and $\Isf_k$ be the alphabet and the deterministic $\MSO$-transduction provided by  \cref{thm:cw-tree-interpretable} for $k$.
Then for every $G\in \Cc$, we can find a $\Sigma_k$-labeled tree $T$ such that $\Isf_k(T)$ is the adjacency encoding of $G$. Note that $V(G)\subseteq V(T)$.
Observe that for every and vertex subset $A\subseteq V(G)$, we have
$$S^{\varphi}(G)[A]\ \subseteq\ S^{\Isf_k^{-1}(\varphi)}(T)[A],$$
where $\Isf_k^{-1}(\varphi)$ is the formula $\varphi$ pulled back through the transduction $\Isf_k$, as given by \cref{lem:btl}.
As by \cref{thm:vc-density-in-trees} we have $|S^{\Isf_k^{-1}(\varphi)}(T)[A]|\leq c\cdot |A|^{|\tup y|}$ for some constant $c$, the same upper bound can be also concluded for the cardinality of $S^{\varphi}(G)[A]$.
This proves \cref{thm:upper-bounds}, part (i).

\medskip

To transfer these result to the case of $\CMSO_2$ over graphs of bounded treewidth, we need to define an additional graph transformation.
For a graph $G$, the {\em{incidence graph}} of $G$ is the bipartite graph with $V(G)\cup E(G)$ as the vertex set, where a vertex $u$ is adjacent to an edge $e$ if and only if $u$ is an endpoint of~$e$.
The following result links $\CMSO_2$ on a graph with $\CMSO_1$ on its incidence graph.

\begin{lemma}[\cite{Courcelle18,Courcelle18a}]\label{lem:incidence-reduce}
    Let $G$ be a graph of treewidth $k$. Then the cliquewidth of the incidence graph of $G$ is at most $k + 3$. 
    Moreover, with any $\CMSO_2$-formula $\varphi(\xb)$ one can associate a $\CMSO_1$-formula $\psi(\xb)$ such that 
    for any graph $H$ and $\bar{a} \in V(H)^{\xb}$ we have $H \models \varphi(\bar{a})$ if and only if $H' \models \psi(\bar{a})$, where $H'$ is the incidence graph of $H$.
\end{lemma}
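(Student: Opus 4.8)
I would prove the lemma as two essentially independent facts, both originally due to Courcelle: the structural bound $\mathrm{cw}(I(G))\le\mathrm{tw}(G)+3$, where I write $I(G)$ for the incidence graph of $G$ (equivalently, the graph obtained from $G$ by subdividing every edge once), and the translation of $\CMSO_2$ on $G$ into $\CMSO_1$ on $I(G)$. Here is the plan for each half.

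\emph{Cliquewidth bound.} Fix a nice tree decomposition of $G$ of width $k$, so every bag has at most $k+1$ vertices. I would build a cliquewidth expression for $I(G)$ using $k+3$ labels, processed bottom-up along the decomposition, maintaining the invariant that the $\le k+1$ vertices currently in the bag carry pairwise distinct labels from a pool of ``slot'' labels $1,\dots,k+1$; every already-forgotten vertex of $G$ and every already-created subdivision vertex carries one common ``retired'' label; and a further label $k+2$ is kept free as scratch. An introduce-vertex step is a disjoint union with a new vertex placed in the appropriate slot. A forget-vertex step relabels that slot to ``retired''. An introduce-edge step for an edge $uv$, with $u,v$ in slots $i,j$, takes the disjoint union with a new vertex labelled $k+2$ (the subdivision vertex of $uv$), adds all edges between label $k+2$ and slot $i$ and between label $k+2$ and slot $j$, and then relabels $k+2$ to ``retired''; this is safe because no later operation adds an edge incident to the ``retired'' class (join nodes are disjoint unions, and edge-additions only ever touch a slot and the scratch label). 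In total this uses $k+1$ slots, one scratch label, and one retired label.

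The delicate point — and the reason one must exploit that the graph in question is an incidence graph, rather than falling back on the generic exponential bound $\mathrm{cw}\le 3\cdot 2^{\mathrm{tw}-1}$ — is the join node: there the two child subtrees share the vertices of the current bag, so a naive disjoint union duplicates them. I would resolve this by working with a normalized decomposition in which each vertex and each edge of $G$ is ``created'' exactly once, using that the material shared across a join is an independent set of original vertices of $I(G)$ whose incident subdivision vertices can all be arranged to appear on a single side of the join; the details would follow \cite{Courcelle18,Courcelle18a}. This is the one step I expect to require genuine care.

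\emph{Logical translation.} Here almost no work is needed, once one observes that the incidence encoding of $H$ and the adjacency encoding of $I(H)$ have the very same domain $V(H)\cup E(H)$, and that the former is the image of the latter under a fixed deterministic $\MSO_1$-transduction: keep the whole domain ($\gamma(x):=(x=x)$) and define the incidence relation by $\mathsf{inc}(e,u):=\mathsf{Edg}(e)\wedge\mathsf{Vtx}(u)\wedge\mathsf{adj}(e,u)$, where $\mathsf{Vtx},\mathsf{Edg}$ are the unary predicates marking the two sides of the bipartite graph $I(H)$. Applying the Backwards Translation Lemma (\cref{lem:btl}) to this transduction turns any $\CMSO_2$-formula $\varphi(\bar x)$ over $H$ into a $\CMSO_1$-formula $\psi(\bar x)$ over $I(H)$ with $I(H)\models\psi(\bar u)$ iff $H\models\varphi(\bar u)$ for every tuple $\bar u$ over $V(H)\cup E(H)$, in particular for $\bar u=\bar a\in V(H)^{\bar x}$, as required. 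Concretely, $\psi$ is just $\varphi$ with every first-order edge/vertex variable guarded by $\mathsf{Edg}$/$\mathsf{Vtx}$, every edge-set/vertex-set monadic variable confined to $\mathsf{Edg}$/$\mathsf{Vtx}$, every incidence atom replaced by an adjacency atom, and the modular predicates $|X|\equiv a\bmod p$ left untouched. The only caveat is that $\psi$ mentions $\mathsf{Vtx},\mathsf{Edg}$, so for this half one reads $I(H)$ as the incidence graph with its bipartition marked; the two extra labels cost at most a constant factor in cliquewidth, which is immaterial downstream, and some marking is genuinely unavoidable — e.g. the incidence graph of a triangle is a vertex-transitive $6$-cycle, on which no formula can tell the vertices of $G$ apart from the edges of $G$.
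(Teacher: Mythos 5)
First, a framing remark: the paper does not prove \cref{lem:incidence-reduce} at all --- it is imported verbatim from \cite{Courcelle18,Courcelle18a} --- so there is no in-paper argument to compare against, and your attempt is necessarily measured against the known proofs. Your second half (the logical translation) is correct and is the standard argument: the incidence encoding of $H$ and the adjacency encoding of its incidence graph share the domain $V(H)\cup E(H)$ and are interdefinable by a trivial deterministic transduction, so \cref{lem:btl} does all the work. Your caveat that the bipartition must be marked is a genuine subtlety, correctly diagnosed (the incidence graph of $K_3$ is the vertex-transitive $C_6$, on which no formula can recover which elements are the edges of $K_3$), and it is harmless for the way the lemma is used downstream.

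The gap is in the cliquewidth bound, and it sits exactly where you flagged it --- but the repair you propose is false. Your invariant keeps the bag vertices alive with distinct slot labels and wires each subdivision vertex to \emph{both} endpoints at the moment of its creation; at a join node this would require every vertex of the shared bag to already exist, correctly labelled, in both child terms, which a disjoint union cannot provide. Your fix --- that the subdivision vertices incident to a shared vertex ``can all be arranged to appear on a single side of the join'' --- fails already for $K_{1,n}$: the center $c$ lies in every bag of a width-$1$ decomposition, and if the leaves are split between the two subtrees of a join then edges incident to $c$ must be processed on both sides. The construction that actually achieves $k+3$ inverts your invariant. Fix a proper $(k+1)$-colouring $\sigma$ of the chordal closure, write $T_v$ for the subtree of decomposition nodes whose bag contains $v$, and $\mathrm{top}(T_v)$ for its rootmost node. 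Create the subdivision vertex of $uv$ at $\mathrm{top}(T_u\cap T_v)$ (which one checks is always $\mathrm{top}(T_u)$ or $\mathrm{top}(T_v)$), wire it immediately to whichever endpoint tops out there, and let it \emph{wait} for the other endpoint $w$ under label $\sigma(w)$; create each original vertex $v$ only at $\mathrm{top}(T_v)$, connect it in a single edge-addition to the entire class $\sigma(v)$ of subdivision vertices waiting for it (the colouring guarantees nothing else carries that label at that node), and retire $v$ and that class at once. So the $k+1$ slot labels are carried by \emph{waiting subdivision vertices}, not by bag vertices, and original vertices never persist; this is what makes joins harmless and keeps the count at $k+3$, whereas patching your scheme with a separate ``pending'' label per slot costs roughly $2k$ labels. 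Without this (or an equivalent) idea the first half does not go through.
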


Now \cref{lem:incidence-reduce} immediately reduces part (ii) of \cref{thm:upper-bounds} to part (i).
Indeed, for every partitioned $\CMSO_2$-formula $\varphi(\tup x,\tup y)$, the corresponding $\CMSO_1$-formula $\psi(\tup x,\tup y)$ provided by \cref{lem:incidence-reduce} satisfies the following: for every graph
$H$ and its incidence graph $H'$, we have
$$S^{\varphi}(H)\subseteq S^{\psi}(H').$$
Observe that by \cref{lem:incidence-reduce}, if a graph class $\Cc$ has uniformly bounded treewidth, then the class $\Cc'$ comprising the incidence graphs of graphs from $\Cc$ has uniformly bounded cliquewidth.
Hence we can apply part (i) of \cref{thm:upper-bounds} to the class $\Cc'$ and obtain an upper bound of the form $|S^{\psi}(H')[A]|\leq c\cdot |A|^{|\tup y|}$ for any $A\subseteq V(H')$, where $c$ is a constant.
By the above containment of set systems, this upper bound carries over to restrictions of~$S^{\varphi}(H)$. This concludes the proof of part (ii) of \cref{thm:upper-bounds}.

\section{Lower bounds}

We now turn to proving \cref{thm:lower-bounds}. 
As in the work of Grohe and Tur\'an~\cite{Grohe04-mso-tree-vc-dim}, the main idea is to show that the structures responsible for unbounded VC dimension of $\MSO$-definable set systems are {\em{grids}}.
That is, the first step is to prove a suitable unboundedness result for the class of grids, which was done explicitly by Grohe and Tur\'an in~\cite[Example~19]{Grohe04-mso-tree-vc-dim}.
Second, if the considered graph class $\Cc$ has unbounded treewidth (resp., cliquewidth), then we give a deterministic $\MSO_2$-transduction (resp. $\CtwoMSO_1$-transduction) from $\Cc$ to the class of grids.
Such transductions are present in the literature and follow from known forbidden-structures theorems for treewidth and cliquewidth. 
Then we can combine these two steps into the proof of \cref{thm:lower-bounds} using the following generic statement.
In the following, we shall say that logic $\Lc$ has {\em{unbounded VC dimension}} on a class of structures $\Cc$ if there exists a partitioned $\Lc$-formula $\varphi(\tup x,\tup y)$ over the signature of $\Cc$ such that 
the class of set systems $S^\varphi(\Cc)$ has infinite VC dimension.

\begin{lemma}\label{lem:pull-back-lb}
 Let $\Cc$ and $\Dd$ be two classes of structures and $\Lc\in \{\MSO,\CMSO,\CtwoMSO\}$. 
 Suppose that there exists a deterministic $\Lc$-transduction $\Isf$ with input signature being the signature of $\Cc$ and the output signature being the signature of $\Dd$
 such that $\Isf(\Cc)\supseteq \Dd$. Then if $\Lc$ has unbounded VC dimension on $\Dd$, then $\Lc$ also has unbounded VC dimension on $\Cc$.
\end{lemma}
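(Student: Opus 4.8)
The plan is to combine the Backwards Translation Lemma (\cref{lem:btl}) with the definition of unbounded VC dimension. Suppose $\Lc$ has unbounded VC dimension on $\Dd$; then there is a partitioned $\Lc$-formula $\varphi(\tup x,\tup y)$ over the signature of $\Dd$ such that the class $S^\varphi(\Dd)$ has infinite VC dimension. I would apply \cref{lem:btl} to the transduction $\Isf$ and the formula $\varphi$ --- note that the partition of free variables into $\tup x,\tup y$ plays no role in that lemma, so I may treat $\varphi(\tup x,\tup y)$ as $\varphi(\tup z)$ with $\tup z$ the concatenation --- to obtain the pulled-back formula $\psi = \Isf^{-1}(\varphi)$ over the signature of $\Cc$, which I will regard as a partitioned formula $\psi(\tup x,\tup y)$ with the same partition.

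The key step is then to verify that for every structure $\Ab\in\Cc$, the set system $S^{\psi}(\Ab)$ contains $S^{\varphi}(\Isf(\Ab))$ as a subsystem in the appropriate sense. Concretely, \cref{lem:btl} tells us that $\Ab\models\psi(\tup u,\tup v)$ iff $\tup u\tup v\in\dom(\Isf(\Ab))^{\tup x\tup y}$ and $\Isf(\Ab)\models\varphi(\tup u,\tup v)$. Fix $\Bb\in\Dd$; since $\Isf(\Cc)\supseteq\Dd$, there is some $\Ab\in\Cc$ with $\Isf(\Ab)=\Bb$, and in particular $\dom(\Bb)=\dom(\Isf(\Ab))\subseteq\dom(\Ab)$. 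For any $\tup v\in\dom(\Bb)^{\tup y}$, the set it defines in $S^{\varphi}(\Bb)$, namely $\{\tup u\in\dom(\Bb)^{\tup x}\colon\Bb\models\varphi(\tup u,\tup v)\}$, is exactly the set $\{\tup u\in\dom(\Ab)^{\tup x}\colon\Ab\models\psi(\tup u,\tup v)\}$ defined by $\tup v$ in $S^{\psi}(\Ab)$ --- the guards in $\psi$ automatically restrict object tuples to $\dom(\Bb)$. Hence every set in $S^{\varphi}(\Bb)$ appears in $S^{\psi}(\Ab)$, and moreover, since $\dom(\Bb)\subseteq\dom(\Ab)$, a subset of $\dom(\Bb)$ shattered by $S^{\varphi}(\Bb)$ is also shattered by $S^{\psi}(\Ab)$ (shattering only requires hitting all subsets of the shattered set, and adding more elements outside $\dom(\Bb)$ to the universe does not destroy this). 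Therefore the VC dimension of $S^{\psi}(\Ab)$ is at least that of $S^{\varphi}(\Bb)$.

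Finally, since $S^{\varphi}(\Dd)$ has infinite VC dimension, for every $d\in\N$ there is some $\Bb\in\Dd$ with $S^{\varphi}(\Bb)$ shattering a set of size $d$; picking a preimage $\Ab\in\Cc$ yields $S^{\psi}(\Ab)$ shattering a set of size $d$ as well. Thus $S^{\psi}(\Cc)$ has infinite VC dimension, so $\psi$ witnesses that $\Lc$ has unbounded VC dimension on $\Cc$. I do not anticipate a genuine obstacle here; the only point requiring mild care is the bookkeeping that the $\tup x$-guards introduced by the backwards translation restrict the object tuples to $\dom(\Isf(\Ab))$ --- so that the set defined by $\tup v$ in $S^{\psi}(\Ab)$ agrees with the one in $S^{\varphi}(\Isf(\Ab))$ rather than being strictly larger --- together with the observation that shattering is inherited when the universe is enlarged. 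One should also note that $\tup v$ ranges over $\dom(\Ab)^{\tup y}$ in $S^{\psi}(\Ab)$ while over $\dom(\Bb)^{\tup y}$ in $S^{\varphi}(\Bb)$; the former is a superset, so $S^{\psi}(\Ab)$ has at least the sets of $S^{\varphi}(\Bb)$, which is all we need.
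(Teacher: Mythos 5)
Your proof is correct and follows exactly the paper's approach: the paper's own proof simply pulls the witnessing formula back through $\Isf$ via the Backwards Translation Lemma (\cref{lem:btl}) and asserts that ``it is easy to see'' this works. You have merely filled in the routine verification (guards restricting object tuples to $\dom(\Isf(\Ab))$, parameter tuples ranging over a superset, and shattering being preserved when the universe is enlarged), all of which is sound.
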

\begin{proof}
 Let formula $\psi(\tup x,\tup y)$ witness that $\Lc$ has unbounded VC dimension on $\Dd$.
 Then it is easy to see that the formula $\varphi\coloneqq \Isf^{-1}(\psi)$, provided by \cref{lem:btl}, witnesses that $\Lc$ has unbounded VC dimension on $\Cc$.
\end{proof}

\subsection{Grids}

\newcommand{\Hor}{\mathsf{H}}
\newcommand{\Ver}{\mathsf{V}}

For $n\in \N$, we denote $[n]\coloneqq \{1,\ldots,n\}$.
An $n\times n$ {\em{grid}} is a relational structure over the universe $[n]\times [n]$ with two successor relations.
The horizontal successor relation $\Hor(\cdot,\cdot)$ selects all pairs of elements of the form $(i,j),(i+1,j)$, where $i\in [n-1]$ and $j\in [n]$.
Similarly, the vertical successor relation $\Ver(\cdot,\cdot)$ selects all pairs of elements the form $(i,j),(i,j+1)$, where $i\in [n]$ and $j\in [n-1]$.
Note that these relations are not symmetric: the second element in the pair must be the successor of the first in the given direction.

Grohe and Tur\'an proved the following.

\begin{theorem}[Example~19 in \cite{Grohe04-mso-tree-vc-dim}]\label{thm:grids-hard}
 $\MSO$ has unbounded VC dimension on the class of grids.
\end{theorem}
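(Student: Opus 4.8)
The plan is to exhibit a single partitioned $\MSO$-formula $\varphi(x,y)$ over the signature of grids such that, for every $n$, the set system $S^{\varphi}(G_n)$ — where $G_n$ is the $n \times n$ grid — shatters a set of size roughly $n$, which already forces unbounded VC dimension on the class. The key idea is that a grid has enough geometric structure to let a single ``parameter'' element $y$ encode, in unary, an arbitrary subset of a linearly-sized ground set: we will use the diagonal cells $D = \{(i,i) : i \in [n]\}$ as the ground set, and we will let the parameter $y$ range over the cells of the grid, thinking of the column-coordinate of $y$ as a number in $[n]$ whose binary-like expansion selects a subset of $D$.

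First I would set up the encoding carefully. Fix $n$ and abbreviate by $\mathsf{first}(x)$, $\mathsf{last}(x)$ the $\MSO$-definable predicates stating that $x$ is in the first (resp. last) row or column, using the successor relations $\Hor, \Ver$ in the standard way; likewise ``$x$ and $x'$ are in the same row'' and ``same column'' are $\MSO$-definable (two cells are in the same row iff they are connected by a path using only $\Hor$-steps, which $\MSO$ can express via a reachability/closure formula, and similarly for columns). The diagonal $D$ is then $\MSO$-definable as the set of cells $(i,i)$, e.g. as those cells reachable from $(1,1)$ by simultaneous $\Hor$-then-$\Ver$ steps — this is again a routine $\MSO$ closure argument. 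To define membership of a diagonal cell $d$ in the set selected by a parameter $y$, I would say: $d = (i,i) \in S_y$ iff the $i$-th bit of the column-index of $y$ is $1$, where the ``$i$-th bit'' is made precise geometrically. Concretely, one clean way: let the column-index of $y$ be $j$; then we encode the subset $\{i : \text{bit } i \text{ of } j \text{ is } 1\}$, and membership of $(i,i)$ is tested by an $\MSO$-formula that, given the column of $y$ and the diagonal cell $(i,i)$, checks the relevant bit. Since $n$ is fixed but our formula must be uniform in $n$, the cleanest route is to avoid binary arithmetic entirely and instead use a different, genuinely first-order-friendly encoding: take the ground set to be the first column $C = \{(1,j) : j \in [n]\}$, let the parameter $y$ be a cell $(i,j)$, and — here is the point — work not with a single grid but realize that to shatter a set of size $m$ we only need $2^m$ distinct parameter values, which we get for free once $n \geq 2^m$. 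The formula $\varphi(x,y)$ then needs to define, for each choice of $y$ in an $n\times n$ grid, a subset of an $m$-element ground set, and over all $y$ these subsets must cover all $2^m$ possibilities. This is exactly the content of \cite[Example~19]{Grohe04-mso-tree-vc-dim}, so in the write-up I would simply recall their construction rather than reinvent it.

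The technical heart, and the step I expect to be the main obstacle, is writing down the formula that extracts the ``$i$-th bit'' information uniformly: one must encode an arbitrary subset of the ground set into the position of a single grid element and then decode it by a fixed $\MSO$-formula independent of $n$. The standard trick (used by Grohe and Tur\'an) is to route a monadic quantifier through the grid: $\varphi(x,y)$ says ``there exists a set $Z$ of cells forming a \emph{monotone staircase path} from the corner to $y$, and $x$ lies on a row/column determined by a turn of $Z$''. Existential monadic quantification over $Z$ lets a single $\MSO$-formula, evaluated at a single parameter $y$, realize exponentially many behaviours on the ground set as $y$ varies, because $y$'s position constrains which staircases are possible and the staircase in turn selects the subset. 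Making the staircase-to-subset correspondence surjective onto all subsets of an $m$-element set, while keeping the formula's size independent of $n$, is the delicate bookkeeping; everything else (definability of rows, columns, the diagonal or first column, and the corner) is routine $\MSO$ over successor relations. Once the formula is fixed, the conclusion is immediate: for every $m$ there is $n$ (e.g. $n = 2^m$) and an $m$-subset $A$ of the ground set of $G_n$ with $S^{\varphi}(G_n) \cap A = 2^A$, so the VC dimension of $S^{\varphi}(\text{grids})$ is at least $m$ for all $m$, hence infinite.
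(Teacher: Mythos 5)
There is a genuine gap: the proposal never actually constructs the formula, and the mechanism it sketches in place of the real one is both not what Grohe and Tur\'an do and problematic on its own terms. First, a warning sign: your opening claim that $S^{\varphi}(G_n)$ shatters a set of size ``roughly $n$'' is impossible for any formula with a single parameter variable $y$ --- there are only $n^2$ choices of $y$, hence at most $n^2$ distinct sets $S_y$, so no shattered set can have size exceeding $2\log_2 n$. You implicitly retreat to $m\approx\log n$ later, which is the correct target, but the slip suggests the quantitative shape of the argument was not clear at the outset. Second, and more importantly, the entire content of the proof is the construction of a \emph{uniform} $\MSO$-formula that, given a diagonal (or first-row) cell indexed by $i$ and a parameter $y$ in column $j$, decides whether the $i$-th bit of $j$ is $1$. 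You correctly identify this as ``the technical heart'' and ``the main obstacle'' --- and then do not do it, instead deferring to the cited reference. The actual trick is to existentially quantify a set $X\subseteq [n]\times[n]$ and axiomatize in $\MSO$ that row $j$ of $X$ encodes, in binary, a number one larger than the number encoded by row $j-1$ (a row-by-row increment condition). This forces $X$ to be \emph{unique}: $(i,j)\in X$ iff bit $i$ of $j$ is $1$. The formula $\varphi(x,y)$ then just asks whether the cell at the intersection of $x$'s index and $y$'s row lies in $X$, and the set $\{(i,0):1\leq i\leq\lfloor\log n\rfloor\}$ is shattered because the binary expansions of $1,\dots,n$ realize all bit vectors of length $\lfloor\log n\rfloor$.

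The ``monotone staircase'' substitute you propose does not repair this. With an existentially quantified $Z$ that is allowed to vary with $y$, the set defined by $y$ is the \emph{union} over all admissible staircases $Z$ of the cells they select; distinct staircases compatible with the same $y$ collapse into one set, so there is no injection from subsets of the ground set into parameter values, and no reason the $2^m$ required subsets are all realized. The binary-counter construction sidesteps exactly this issue because the auxiliary set $X$ is canonically determined by the $\MSO$ conditions, independently of $y$, so the existential quantifier is harmless. Until that decoding formula is written down (or the uniqueness of the quantified witness is otherwise secured), the proof is incomplete at its only nontrivial step.
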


The proof of \cref{thm:grids-hard} roughly goes as follows. 
The key idea is that for a given set of elements $X$ it is easy to verify in $\MSO$ the following property: $(i,j)\in X$ is true if and only if the $i$th bit of the binary encoding of $j$ is $1$.
This can be done on the row-by-row basis, by expressing that elements of $X$ in every row encode, in binary, a number that is one larger than what the elements of $X$ encoded in the previous row.
Using this observation, one can easily write a formula $\varphi(x,y)$ that selects exactly pairs of the form $((i,0),(0,j))$ such that $(i,j)\in X$. Then 
$\varphi(x,y)$ shatters the set $\{(i,0)\colon 1\leq i\leq \lfloor \log n\rfloor\}$, as the binary encodings of numbers from $1$ to $n$ give all possible bit vectors of length $\lfloor \log n\rfloor$ when restricted 
to the first $\lfloor \log n\rfloor$ bits. Consequently, $\varphi(x,y)$ shatters a set of size $\lfloor \log n\rfloor$ in an $n\times n$ grid, which enables us to deduce the following slight strengthening of \cref{thm:grids-hard}:
$\MSO$ has unbounded VC dimension on any class of structures that contains infinitely many different grids.

For the purpose of using existing results from the literature, it will be convenient to work with {\em{grid graphs}} instead of grids. 
An $n\times n$ {\em{grid graph}} is a graph on vertex set $[n]\times [n]$ where two vertices $(i,j)$ and $(i',j')$ are adjacent if and only if $|i-i'|+|j-j'|=1$.
When speaking about grid graphs, we assume the adjacency encoding as relational structures. Thus, the difference between grid graphs and grids is that the former are only equipped with a symmetric adjacency relation 
without distinguishement of directions, while in the latter we may use (oriented) successor relations, different for both directions.
Fortunately, grid graphs can be reduced to grids using a well-known construction, as explained next.

\begin{lemma}\label{lem:grid-graphs-to-grids}
 There exists a non-deterministic $\MSO$ transduction $\Jsf$ from the adjacency encodings of graphs to grids such that for every class of graphs $\Cc$ that contains arbitrarily large grid graphs,
 the class $\Jsf(\Cc)$ contains arbitrarily large grids.
\end{lemma}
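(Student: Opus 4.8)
The plan is to exhibit a non-deterministic $\MSO$ transduction $\Jsf$ that, given a grid graph $G$ on vertex set $[n]\times[n]$ (in the adjacency encoding), reconstructs the two oriented successor relations $\Hor$ and $\Ver$ of the corresponding grid. The key difficulty is that a grid graph has no distinguished orientation or origin: the adjacency relation alone does not tell us which of the two ``axis directions'' at a vertex is horizontal and which is vertical, nor in which sense each axis is oriented. This is precisely why we need \emph{non-determinism}: we will guess a small amount of extra unary information (a colouring of the vertices) that pins down the coordinate system, then verify in $\MSO$ that the guess is consistent, and finally define $\Hor$ and $\Ver$ by $\MSO$ formulas over the guessed colouring.

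First I would recall the standard way to orient a grid graph. In an $n \times n$ grid graph with $n \geq 3$, the four corners are exactly the degree-$2$ vertices, the non-corner boundary vertices have degree $3$, and the interior vertices have degree $4$; moreover one can identify, in $\MSO$, the ``boundary cycle'' and hence recognise a corner. Fixing one corner as the origin $(1,1)$ already breaks most of the symmetry. To distinguish the horizontal from the vertical direction I would have the transduction guess a $2$-colouring of the vertices with colours, say, $R$ (``even column'') and its complement, together with a second $2$-colouring for ``even row'', and two further unary predicates marking the guessed first row and first column. The verification formula then checks: the marked first row and first column form two paths meeting at a common corner; every vertex lies in the sub-grid they span; the colour classes alternate correctly along every edge in a way consistent with being ``parity of column'' resp. ``parity of row''; and the whole structure is consistent. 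Once this passes, ``$u$ is the horizontal successor of $v$'' is expressible: $u \sim v$, the column parities of $u$ and $v$ differ, the row parities agree, and $u$ is the neighbour of $v$ ``away from the first column'', which can be pinned down inductively via the parity pattern and the marked first column. The vertical successor relation is defined symmetrically. This gives the deterministic part $\Jsf'$ acting on $\Sigma \cup \Gamma(\Jsf)$-structures, where $\Gamma(\Jsf)$ consists of the guessed unary predicates.

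The main obstacle — and the part that needs genuine care rather than routine bookkeeping — is writing the verification formula so that it \emph{rejects} every guess except the intended ones: we must be sure that on any grid graph the only colourings/markings surviving the check are the at most eight ``correct'' ones (corresponding to the choice of origin corner and the swap of the two axes), so that $\Jsf(G)$ contains exactly the grid of $G$ (and possibly its rotations/reflections, which are isomorphic and equally large), and nothing spurious. In particular one must handle small or degenerate cases ($n \le 2$, or non-square inputs) either by excluding them or by checking that the guessed structure is genuinely a full $m \times m'$ grid graph; since the lemma only requires that $\Jsf(\Cc)$ contain \emph{arbitrarily large} grids whenever $\Cc$ contains arbitrarily large grid graphs, it is harmless to have $\Jsf$ produce the empty or a trivial structure on malformed inputs. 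Finally, with $\Jsf$ in hand the conclusion is immediate: if $\Cc$ contains $n \times n$ grid graphs for unboundedly large $n$, then for each such $G$ some member of $\Jsf(G)$ is an $n \times n$ grid, so $\Jsf(\Cc)$ contains arbitrarily large grids, as required.
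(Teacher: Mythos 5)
Your overall strategy --- non-deterministically guess unary predicates encoding coordinate information, then define the two successor relations from adjacency plus the guess --- is the same as the paper's, but two points deserve attention. First, you spend most of your effort on a \emph{verification} formula whose job is to reject every guess other than the intended ones, and you single this out as the main difficulty. Under the semantics of non-deterministic transductions used here, $\Jsf(G)$ is the \emph{set} of outputs over all guesses, and the lemma only asks that $\Jsf(\Cc)$ \emph{contain} arbitrarily large grids; incorrect guesses are free to produce garbage structures, on grid graphs and non-grid graphs alike. So the entire verification step can be dropped, which is exactly what the paper does: its proof consists of the definition of the guessed predicates and the observation that the intended guess yields the intended grid. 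Second, and more substantively, your choice of mod-$2$ parities for rows and columns does not by itself determine the \emph{orientation} of the successor relations: if $u$ and $v$ are horizontally adjacent, their column parities differ regardless of which of the two is the successor, so ``the parity pattern'' cannot pin down the direction, and the phrase ``pinned down inductively via the parity pattern and the marked first column'' glosses over the one genuinely load-bearing step. This is repairable --- e.g.\ by expressing in $\MSO$ that $v$ separates $u$ from the marked first column within its row --- but it is precisely the complication the paper's construction avoids: by colouring columns with residues mod $3$ (predicates $A_0,A_1,A_2$) and rows likewise ($B_0,B_1,B_2$), the pair of classes $(t,\, t+1 \bmod 3)$ is distinguishable from $(t,\, t-1 \bmod 3)$, so the horizontal successor relation is definable by a purely local condition on the guessed colours, with no need for corners, marked first rows or columns, or any inductive propagation of direction.
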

\begin{proof}
 The transduction uses six additional unary predicates, that is, $\Gamma(\Jsf)=\{A_0,A_1,A_2,B_0,B_1,B_2\}$. We explain how the transduction works on grid graphs, which gives rise to a formal definition of the transduction
 in a straightforward way.
 
 Given an $n\times n$ grid graph $G$, the transduction non-deterministically chooses the valuation of the predicates of $\Gamma(\Jsf)$ as follows: 
 for $t\in \{0,1,2\}$, $A_t$ selects all vertices $(i,j)$ such that $i\equiv t\bmod 3$ and $B_t$ selects all vertices $(i,j)$ such that $j\equiv t\bmod 3$.
 Then the horizontal successor relation $\Hor(\cdot,\cdot)$ can be interpreted as follows: $\Hor(u,v)$ holds if and only if $u$ and $v$ are adjacent in $G$, 
 $u$ and $v$ are both selected by $B_s$ for some $s\in \{0,1,2\}$, and there is $t\in \{0,1,2\}$ such that $u$ is selected by $A_t$ while $v$ is selected by $A_{t+1\bmod 3}$.
 The vertical successor relation is interpreted analogously.
 
 It is easy to see that if $G$ is an $n\times n$ grid graph and the valuation of the predicates of $\Gamma(\Jsf)$ is selected as above, then $\Jsf$ indeed outputs an $n\times n$ grid.
 This implies that if $\Cc$ contains infinitely many different grid graphs, then $\Jsf(\Cc)$ contains infinitely many different grids.
\end{proof}

We may now combine \cref{lem:grid-graphs-to-grids} with \cref{thm:grids-hard} to show the following.

\begin{lemma}\label{lem:grids-hard}
 Suppose $\Lc\in \{\MSO,\CtwoMSO,\CMSO\}$ and $\Cc$ is a class of structures such that there exists a non-deterministic $\Lc$-transduction $\Isf$ from $\Cc$ to adjacency encodings of graphs 
 such that $\Isf(\Cc)$ contains infinitely many different grid graphs. Then there exists a finite signature $\Gamma$ consisting only of unary relation names such that $\Lc$ has unbounded VC dimension on $\Cc^{\Gamma}$.
\end{lemma}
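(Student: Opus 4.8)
The plan is to compose the given transduction $\Isf$ with the transduction $\Jsf$ of \cref{lem:grid-graphs-to-grids}, obtaining a single non-deterministic $\Lc$-transduction from $\Cc$ directly to the class of grids whose image still contains infinitely many distinct grids, and then to pull back, through the \emph{deterministic part} of this transduction, the formula witnessing \cref{thm:grids-hard}. The predicates used to resolve the non-determinism will be the desired signature $\Gamma$.

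In more detail, the first step is to argue that non-deterministic transductions compose while respecting the logic: if $\Isf$ is a non-deterministic $\Lc$-transduction from $\Cc$ to adjacency encodings of graphs and $\Jsf$ is the non-deterministic $\MSO$-transduction from \cref{lem:grid-graphs-to-grids}, then there is a non-deterministic $\Lc$-transduction $\Ksf$ with $\Ksf(\Ab) = \Jsf(\Isf(\Ab))$ for every input $\Ab$. Concretely, after renaming the unary predicates of $\Gamma(\Isf)$ and $\Gamma(\Jsf)$ apart, set $\Gamma(\Ksf) = \Gamma(\Isf)\cup\Gamma(\Jsf)$; the deterministic part $\Ksf'$ is obtained from the deterministic part $\Jsf'$ of $\Jsf$ — a tuple of $\MSO$-formulas, hence $\Lc$-formulas, over the output signature of $\Isf'$ enriched with $\Gamma(\Jsf)$ — by pulling each of its formulas back through the deterministic part $\Isf'$ of $\Isf$ using the Backwards Translation Lemma (\cref{lem:btl}). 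Since each formula of $\Jsf'$ lies in $\MSO\subseteq\Lc$, \cref{lem:btl} keeps us within $\Lc$, so $\Ksf'$ is a deterministic $\Lc$-transduction. As $\Isf(\Cc)$ contains infinitely many distinct grid graphs, \cref{lem:grid-graphs-to-grids} guarantees that $\Ksf(\Cc) = \Jsf(\Isf(\Cc))$ contains infinitely many distinct grids.

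Now put $\Gamma \coloneqq \Gamma(\Ksf)$ and let $\Ksf'$ be the deterministic part of $\Ksf$, so $\Ksf'$ is a deterministic $\Lc$-transduction from $\Cc^{\Gamma}$-structures to grids and $\Ksf'(\Cc^{\Gamma}) = \Ksf(\Cc)$. Let $\Dd$ be the subclass of $\Ksf(\Cc)$ consisting of all grids it contains; then $\Dd$ contains infinitely many distinct grids and $\Ksf'(\Cc^{\Gamma})\supseteq\Dd$. By the strengthening of \cref{thm:grids-hard} noted just after its statement, $\MSO$ has unbounded VC dimension on any class containing infinitely many distinct grids, in particular on $\Dd$; since every $\MSO$-formula is an $\Lc$-formula, the same holds for every $\Lc\in\{\MSO,\CtwoMSO,\CMSO\}$. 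Applying \cref{lem:pull-back-lb} to the deterministic transduction $\Ksf'$ (with input class $\Cc^{\Gamma}$, output class $\Dd$, and $\Ksf'(\Cc^{\Gamma})\supseteq\Dd$) then yields that $\Lc$ has unbounded VC dimension on $\Cc^{\Gamma}$, which is exactly the claim.

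The only genuinely delicate point I expect is the bookkeeping in the first step: verifying that the composition of two non-deterministic transductions is again a non-deterministic transduction of the right logic, with the auxiliary unary predicates merely accumulating. This is precisely what \cref{lem:btl} provides, applied to the deterministic parts; the rest is routine chaining of \cref{lem:grid-graphs-to-grids}, \cref{thm:grids-hard}, and \cref{lem:pull-back-lb}.
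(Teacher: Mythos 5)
Your proposal is correct and follows essentially the same route as the paper: compose $\Isf$ with the transduction $\Jsf$ of \cref{lem:grid-graphs-to-grids} into a single non-deterministic $\Lc$-transduction $\Ksf$, set $\Gamma\coloneqq\Gamma(\Ksf)$, and apply \cref{lem:pull-back-lb} to the deterministic part $\Ksf'$ together with \cref{thm:grids-hard} and the remark following it. The only difference is that the paper simply cites closure of non-deterministic transductions under composition from the literature, whereas you unfold that step via \cref{lem:btl}; your sketch of it is adequate.
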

\begin{proof}
 As non-deterministic transductions are closed under composition for all the three considered variants of logic (see e.g. \cite{Courcelle94-mso-transduction-survey}),
 from \cref{lem:grid-graphs-to-grids} we infer that there exists a non-deterministic $\Lc$-transduction $\Ksf$ such that $\Ksf(\Cc)$ contains infinitely many different grids.
 By definition, transduction $\Ksf$ has its deterministic part $\Ksf'$ such that $\Ksf(\Cc)=\Ksf'(\Cc^{\Gamma(\Ksf)})$.
 It now remains to take $\Gamma\coloneqq \Gamma(\Ksf)$ and use \cref{lem:pull-back-lb} together with \cref{thm:grids-hard} (and the remark after it).
\end{proof}

\subsection{Classes with unbounded treewidth and cliquewidth}

For part (ii) of \cref{thm:lower-bounds} we will use the following standard proposition, which essentially dates back to the work of Seese~\cite{Seese91-mso-tw-undecidable}.

\begin{lemma}\label{lem:tw-grid}
 There exists a non-deterministic $\MSO$-transduction $\Isf$ from incidence encodings of graphs to adjacency encodings of graphs such that
 for every graph class $\Cc$ whose treewidth is not uniformly bounded, the class $\Isf(\Cc)$ contains all grid graphs.
\end{lemma}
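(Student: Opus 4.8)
The plan is to combine the Excluded Grid Theorem with a single non-deterministic transduction that guesses a grid-minor model and reads the grid off it. By the Excluded Grid Theorem of Robertson and Seymour there is a function $f\colon\N\to\N$ such that every graph of treewidth at least $f(n)$ contains the $n\times n$ grid graph as a minor; hence, since $\Cc$ has unbounded treewidth, for every $n\in\N$ there is a graph $G\in\Cc$ that contains the $n\times n$ grid graph as a minor. Note that we cannot upgrade this to a \emph{topological} minor (walls are large-treewidth subcubic graphs, and a topological copy of the $3\times3$ grid graph needs a vertex of degree $4$), so the transduction must guess branch \emph{sets} rather than branch vertices. The transduction $\Isf$ is designed so that, on the incidence encoding of such a $G$, one of its non-deterministic runs outputs exactly the $n\times n$ grid graph; since the statement only asks for $\Isf(\Cc)$ to \emph{contain} every grid graph, $\Isf$ need not check that a given guess is a legitimate model, and runs producing garbage do no harm.

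Working over the incidence encoding gives MSO over the input the power of $\MSO_2$ on $G$; in particular it may quantify over edge sets and express connectivity of edge-subgraphs. The set $\Gamma(\Isf)$ of guessed unary predicates will be: a predicate $F$ on edges, a predicate $R$ on vertices, and a predicate $M$ on edges. The intended meaning is that $F$ is a spanning forest of the union of the branch sets, with its connected components being spanning trees of the individual branch sets; $R$ picks exactly one representative vertex in each branch set; and $M$ picks, for each edge of the grid, one edge of $G$ witnessing the corresponding adjacency between branch sets. Given a vertex $v$, its branch set is definable as the $F$-component of $v$, and the map sending $v$ to the unique $R$-vertex in that component is MSO-definable. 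The transduction then restricts the universe to $R$, via $\gamma(x)\coloneqq (x\text{ is a vertex})\wedge R(x)$, and interprets the (symmetric) adjacency of the output graph by the formula $\theta(x,y)$ expressing: ``$R(x)$, $R(y)$, and there is an edge in $M$ whose two endpoints lie, respectively, in the $F$-component of $x$ and in the $F$-component of $y$.'' All of this is plain MSO over the incidence encoding (edge membership, incidence, connectivity in the subgraph with edge set $F$), so $\Isf$ is a non-deterministic $\MSO$-transduction as required; optionally one can also have $\Isf$ verify in MSO that the guess is an honest $n\times n$-grid model for some $n$, but this is not needed.

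For correctness of the intended guess, fix $n$ and a graph $G\in\Cc$ with an $n\times n$-grid-graph minor, and fix a minor model: pairwise disjoint connected vertex sets $B_{i,j}$ for $(i,j)\in[n]\times[n]$ with a $G$-edge between $B_{i,j}$ and $B_{i',j'}$ whenever $(i,j)$ and $(i',j')$ are grid-adjacent. Let $F$ be the union of a spanning tree of each $B_{i,j}$, let $R$ contain one vertex of each $B_{i,j}$, and let $M$ contain, for each grid edge, one $G$-edge realizing the corresponding adjacency. With this guess the output universe is $R$ (of size $n^2$), and $\theta$ relates two representatives iff some $M$-edge joins their branch sets; since every edge of $M$ has its endpoints in a fixed pair of (disjoint) branch sets and $M$ contains exactly one edge per grid edge, $\theta$ holds between the representatives of $B_{i,j}$ and $B_{i',j'}$ precisely when $(i,j)$ and $(i',j')$ are grid-adjacent. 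Hence $\Isf$ outputs (an isomorphic copy of) the $n\times n$ grid graph. Letting $n$ range over $\N$ and using that $\Cc$ has unbounded treewidth, we conclude that $\Isf(\Cc)$ contains every grid graph.

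The main obstacle, as I see it, is not conceptual but a matter of getting the formalisation airtight: correctly invoking the Excluded Grid Theorem (and remembering that only a minor, not a topological minor, is available, which forces the forest-plus-representatives encoding of branch sets); writing the MSO formulas for ``spanning forest'', ``the $F$-component of $x$'', and ``an $M$-edge joins these two components'' precisely over the incidence encoding; and checking that no spurious adjacencies are created for the intended guess (which comes down to the branch sets being pairwise disjoint and $M$ containing one edge per grid edge). A minor additional point is the usual mild abuse that $\Cc$ is a class of graphs whereas $\Isf$ consumes incidence encodings, handled by implicitly composing with the passage to incidence encodings, exactly as in the surrounding text.
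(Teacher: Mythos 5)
Your proposal is correct and follows essentially the same route as the paper: invoke the Excluded Grid Minor Theorem, then build a non-deterministic transduction that guesses a set of representative vertices, the edges of spanning trees of the branch sets, and one witnessing edge per grid edge, and reads off adjacency via connectivity through the guessed forest (your $R$, $F$, $M$ are the paper's $D$, $F$, $L$). The only cosmetic difference is that the paper phrases the adjacency formula as "connected by a path using only $F$-edges and one $L$-edge," which is equivalent to your component-based formulation.
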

\begin{proof}
 Recall that a {\em{minor model}} of a graph $H$ in a graph $G$ is a mapping $\phi$ from $V(H)$ to connected subgraphs of $G$ such that subgraphs $\{\phi(u)\colon u\in V(H)\}$ are pairwise disjoint,
 and for every edge $uv\in E(H)$ there is an edge in $G$ with one endpoint in $\phi(u)$ and the other in $\phi(v)$. Then $G$ contains $H$ as a {\em{minor}} if there is a minor model of $H$ in $G$.
 By the Excluded Grid Minor Theorem~\cite{Robertson86-excluded-grid-theorem}, if a class of graphs $\Cc$ has unbounded treewidth, then every grid graph is a minor of some graph from $\Cc$.
 Therefore, it suffices to give a non-deterministic $\MSO$-transduction $\Isf$ from incidence encodings of graphs to adjacency encodings of graphs such that for every graph $G$,
 $\Isf(G)$ contains all minors of $G$.
 
 The transduction $\Isf$ works as follows. Suppose $G$ is a given graph and $\phi$ is a minor model of some graph $H$ in $G$.
 First, in $G$ we non-deterministically guess three subsets: 
 \begin{itemize}[nosep]
  \item a subset $D$ of vertices, containing one arbitrary vertex from each subgraph of $\{\phi(u)\colon u\in V(H)\}$;
  \item a subset $F$ of edges, consisting of the union of spanning trees of subgraphs $\{\phi(u)\colon u\in V(H)\}$ (where each spanning tree is chosen arbitrarily);
  \item a subset $L$ of edges, consisting of one edge connecting a vertex of $\phi(u)$ and a vertex of $\phi(v)$ for each edge $uv\in E(H)$, chosen arbitrarily.
 \end{itemize}
 Recall that graph $G$ is given by its incidence encoding, hence these subsets can be guessed using three unary predicates in $\Gamma(\Isf)$.
 Now with sets $D,F,L$ in place, the adjacency encoding of the minor $H$ can be interpreted as follows: the vertex set of $H$ is $D$, while two vertices $u,u'\in D$ are adjacent in $H$ if and only if 
 in $G$ they can be connected by a path that traverses only edges of $F$ and one edge of $L$. It is straightforward to express this condition in $\MSO_2$.
\end{proof}

Observe that part (ii) of \cref{thm:lower-bounds} follows immediately by combining \cref{lem:tw-grid} with \cref{lem:grids-hard}.
Indeed, from this combination we obtain a partitioned $\MSO$-formula $\varphi(\tup x,\tup y)$ and a finite signature $\Gamma$ consisting of unary relation names such that
the class of set systems $S^{\varphi}(\Cc^\Gamma)$ has infinite VC dimension. Here, we treat $\Cc$ as the class of incidence encodings of graphs from $\Cc$.
Now if we take the label set $\Lambda$ to be the powerset of $\Gamma$, we can naturally modify $\varphi(\tup x,\tup y)$ to an equivalent formula $\varphi'(\tup x,\tup y)$ working over $\Lambda$-ve-labelled graphs,
where the $\Lambda$-label of every vertex $u$ encodes the subset of predicates of $\Gamma$ that select~$u$. 
Thus $S^{\varphi'}(\Cc^{\Lambda,2})$ has infinite VC dimension, which concludes the proof of part (ii) of \cref{thm:lower-bounds}.

To prove part (i) of \cref{thm:lower-bounds} we apply exactly the same reasoning, but with \cref{lem:tw-grid} replaced with the following result of Courcelle and Oum~\cite{Courcelle07-vertex-minors-seese}.

\begin{lemma}[Corollary~7.5 of \cite{Courcelle07-vertex-minors-seese}]
    \label{lem:c2mso1-grids}
    There exists a $\CtwoMSO$-transduction $\Isf$ from adjacency encodings of graphs to adjacency encodings of graphs such that if $\Cc$ is a class of graphs of unbounded cliquewidth, 
    then $\Isf(\Cc)$ contains arbitrarily large grid graphs.
\end{lemma}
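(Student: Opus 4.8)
The plan is to reduce the statement to a ``grid theorem'' for rank-width and then to observe that passing to a vertex-minor is realized by a single (non-deterministic) $\CtwoMSO$-transduction. I would first recall that rank-width and cliquewidth are functionally equivalent parameters, so that ``$\Cc$ has unbounded cliquewidth'' is the same as ``$\Cc$ has unbounded rank-width''; hence from the hypothesis we obtain graphs of arbitrarily large rank-width in $\Cc$. Since the $n\times n$ grid graph has rank-width growing with $n$, grid graphs are precisely the ``canonical'' unbounded-rank-width objects one hopes to recover, so this is the right target.

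The combinatorial ingredient is a rank-width analogue of the Excluded Grid Minor Theorem (due to Oum): for every $n$ there is a bound $N(n)$ such that every graph of rank-width at least $N(n)$ has, as a \emph{vertex-minor}, a grid graph of side length at least $n$. Here a vertex-minor is any graph obtainable by a sequence of local complementations (local complementation at $v$ toggles adjacency between every two neighbours of $v$) interleaved with vertex deletions. Combined with the previous step, this shows that the class consisting of all vertex-minors of graphs from $\Cc$ contains arbitrarily large grid graphs; so it suffices to exhibit a $\CtwoMSO$-transduction $\Isf$ whose image on any graph $G$ contains all vertex-minors of $G$, since then $\Isf(\Cc)$ already contains arbitrarily large grid graphs.

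Constructing such an $\Isf$ is the heart of the matter. The obvious difficulty is that a vertex-minor may require a sequence of local complementations of unbounded length, whereas a transduction can only guess a bounded amount of data per vertex; moreover even a single local complementation changes adjacency by a parity (``XOR'') rule, so iterating it leads to adjacency relations governed by $\mathrm{GF}(2)$-linear algebra over the vertex set. The way around this is Bouchet's structure theory of local equivalence (isotropic systems / $\Delta$-matroids): the graphs locally equivalent to a fixed $G$ are parametrised by assigning to each vertex one of a \emph{constant} number of ``local scalars'', and in the resulting graph the adjacency of two vertices is a fixed $\mathrm{GF}(2)$-bilinear function of these scalars together with the original adjacency --- concretely, a condition on the parity of certain sets of common neighbours. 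Accordingly $\Isf$ non-deterministically guesses, via finitely many unary predicates, the subset of retained vertices together with the per-vertex local-scalar labels, uses the domain formula $\gamma$ to delete the non-retained vertices, and defines the output adjacency relation by the corresponding parity formula. This is exactly the point at which the modular (parity) predicates of $\CtwoMSO$ are indispensable --- the adjacency rule requires counting a set of vertices modulo $2$, which plain $\MSO$ cannot do --- and it explains why the final dichotomy has to be phrased for $\CtwoMSO_1$ rather than $\MSO_1$.

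I expect this last step to be the main obstacle, in two respects: first, correctly extracting from local-equivalence theory the finite parameter set and the $\mathrm{GF}(2)$-parity-definable adjacency rule, so that only boundedly many bits per vertex need to be guessed; and second, verifying that this captures arbitrary vertex-minors and not merely locally equivalent graphs --- deletions are trivially handled by $\gamma$, but one must argue that deletions and local complementations can be reordered so that all of the local-complementation effect is encoded by the single layer of guessed labels. The remaining pieces --- the cliquewidth/rank-width equivalence, the rank-width grid theorem, and the trivial composition argument that finishes the proof --- are standard or directly quotable.
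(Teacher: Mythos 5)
The paper does not actually prove this lemma: it is quoted verbatim as Corollary~7.5 of Courcelle and Oum, so there is no in-paper argument to compare against. Judging your reconstruction on its own merits, the logical half is essentially right and is indeed the heart of the cited result: local equivalence is captured by a single non-deterministic $\CtwoMSO$-transduction that guesses a bounded per-vertex label coming from Bouchet's theory of isotropic systems and redefines adjacency by a parity condition on common neighbourhoods, vertex deletions are absorbed into the domain formula (using the standard fact that every vertex-minor is an induced subgraph of some locally equivalent graph), and this is precisely where the modulo-$2$ predicates become indispensable.

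The gap is in your combinatorial ingredient. The statement you attribute to Oum --- every graph of rank-width at least $N(n)$ has an $n\times n$ grid graph as a \emph{vertex-minor} --- is not Oum's theorem, and you should not treat it as ``directly quotable.'' What Oum proved is that every \emph{bipartite} graph of sufficiently large rank-width contains the specific bipartite graph $S_k$ (with parts $\{a_1,\dots,a_{k^2-1}\}$, $\{b_1,\dots,b_{k^2-k}\}$ and $a_ib_j\in E$ iff $i\le j<i+k$) as a vertex-minor. Courcelle and Oum then need two further steps that are absent from your sketch: a transduction reducing an arbitrary class of unbounded rank-width to a class of bipartite graphs of unbounded rank-width, and a separate (non-deterministic, plain $\MSO$) transduction that produces the $(2k-2)\times k$ grid graph from $S_k$ --- the grid is obtained by interpretation, not as a vertex-minor. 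Your stronger claim is also in genuine danger of being false rather than merely misattributed: circle graphs have unbounded rank-width and are closed under vertex-minors, so a grid can occur as a vertex-minor of every graph of large rank-width only if every grid is a circle graph; the modern ``grid theorem for vertex-minors'' is for this reason stated with circle graphs, not grids, as the unavoidable substructures. So you should replace your excluded-grid step by the $S_k$ route (bipartization, Oum's $S_k$ theorem, the $S_k$-to-grid transduction) and then compose with your vertex-minor transduction; with that substitution the argument goes through.
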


\bibliographystyle{abbrv}
\bibliography{thesis}

\end{document}